\newtheorem{lemma}{Lemma}
\newtheorem{proposition}{Proposition}
\newtheorem{corollary}{Corollary}
\newtheorem{theorem}{Theorem}
\theoremstyle{definition}
\newtheorem{definition}{Definition}
\DeclareMathOperator{\tr}{tr}
\DeclareMathOperator{\Span}{span}
\DeclareMathOperator{\Int}{int}
\begin{document}
\title{Global Spacetime Similarity}
\author{Samuel C. Fletcher 
}                     
\affil{ 
University of Minnesota, Twin Cities,
USA
}
%
%
%
\maketitle
\doublespacing
\begin{abstract}
There are two classes of topologies most often placed on the space of Lorentz metrics on a fixed manifold. As I interpret a complaint of R.~Geroch [\textit{Relativity}, 259 (1970); \textit{Gen. Rel. Grav.}, 2, 61 (1971)], however, neither of these standard classes correctly captures a notion of global spacetime similarity. In particular, Geroch presents examples to illustrate that one, the compact-open topologies, in general seems to be too coarse, while another, the open (Whitney) topologies, in general seems to be too fine.  After elaborating further the mathematical and physical reasons for these failures, I then construct a topology that succeeds in capturing a notion of global spacetime similarity and investigate some of its mathematical and physical properties.
\end{abstract}
%


\section{Introduction}
A topology on a collection of mathematical objects formalizes a precise notion of similarity amongst them.  It determines which sequences of such objects converge, which parameterized families vary continuously, and which properties of subcollections are generic or are stable under sufficiently small perturbations.  When these objects are spacetimes, examining which limits of sequences and families of spacetimes converge yields both a way to construct new ones and further understanding of the relationships between them, such as how Newtonian spacetimes can be understood as certain limits of relativistic spacetimes \cite{Fletcher14a}.  A topology on the collection of spacetimes also provides a framework for addressing questions of the genericness and stability, hence physical significance, of acausalities and singularities \cite{Geroch70,Geroch71,Fletcher14a,Hawking71,Lerner73}.

Of course, there are always many topologies from which to choose, each capturing a slightly different type of similarity.  Typically, though, one does not begin an investigation with a predetermined topology in hand, but rather with certain intuitions, examples, and other desiderata a desired notion of similarity should respect.  One then attempts to formalize that notion with a choice of topology.  But when such topologies are introduced, insufficient attention is usually paid to justifying how they respect the relevant desiderata, that is, how they really do capture the relevant notion of similarity.  Ignoring this demand can lead to counterintuitive and misleading conclusions, such as it being generic for a spacetime to contain closed timelike curves \cite[Prop.~4]{Fletcher14b}.

One notable exception to this omission has been Geroch \cite{Geroch70,Geroch71}, who argues that no instance of either of the two most commonly used topologies, the \textit{compact-open} and \textit{open} topologies, meets the demands on convergence and continuity imposed by three simple examples.  
(I follow here the terminology of Hawking \cite{Hawking71} and Hawking and Ellis \cite{HE73}. Geroch \cite{Geroch70,Geroch71} calls the compact-open and open topologies the \textit{coarse} and \textit{fine} topologies, respectively.  The open topologies are sometimes called the \textit{Whitney} topologies by mathematicians \cite{GG73}. Caution: Hawking \cite{Hawking71} defines a class of fine topologies that are not in general homeomorphic to the open topologies.)
In the next section I discuss his examples and interpret his desiderata as a demand for a topology that captures a notion of global similarity analogous to the topology of uniform convergence for real functions of a single variable.  In this light the unsuitability of the compact-open topologies is manifest, as they correspond to topologies of uniform convergence on compacta. (I assume the fixed spacetime manifold under consideration is non-compact, as most cases of interest are.  Otherwise the compact-open and open topologies coincide.)  The failure of the open topologies is more subtle, however, since it straightforwardly \textit{is} a topology of uniform convergence in the mathematical sense.  The key to comprehending this failure is the algebraic structure of the Lorentz metrics, understood as a subset of the twice-covariant smooth tensor fields on a manifold, which together form a real vector space of uncountable dimension under pointwise addition.  My diagnosis is that the force of Geroch's examples comes from their implicit demand to make vector addition and scalar multiplication in this space continuous, a demand that the open topologies do not in general satisfy. Indeed, the topology of uniform convergence on an infinite-dimensional vector space is not necessarily compatible with the space's linear structure.

Next, I articulate a sense in which these demands cannot be jointly met.  That is, I show that there is no topology that both satisfies Geroch's desiderata and is compatible with the linear structure of the Lorentz metrics (as a subset of the twice-covariant smooth tensor fields).  The heart of this incompatibility is the fact that every topology compatible with the linear structure of a space must be translation-invariant: roughly put, the neighborhood structure must look the same at every point.  I shall argue that, properly interpreted, one of Geroch's examples requires that whether a ``perturbation" (in a sense that I make precise) of a certain Lorentz metric is ``small" or ``large" must depend on that metric in such a way as to make the same perturbation small for some metrics and large for others.  But if the neighborhood structure at every point is the same, then a perturbation that is small for one metric must be small for all others.  Thus if the twice-covariant smooth tensor fields on a manifold are given any translation-invariant topology, the subspace topology it induces on the Lorentz metrics cannot satisfy Geroch's desiderata.

Fortunately there \textit{is} a way to satisfy Geroch's desiderata if one weakens the requirement of compatibility slightly.  To do so, I construct the \textit{global topology} on the Lorentz metrics, which divides them into an (uncountably) infinite number of uniform components, each corresponding to different ``asymptotic behavior" (in a sense that I make precise).  This topology does makes scalar multiplication continuous, but vector addition will not be so in general.  However, through translation the subspace topology on each uniform component induces a topology on the whole space of twice-covariant smooth tensor fields that \textit{does} make vector addition everywhere continuous, hence \textit{does} make the space a topological vector space.  In other words, ``within" each uniform component the linear operations are continuous.  The construction of the global topologies may be of mathematical interest on its own, for it requires almost no details specific to general relativity.  Indeed, it can be used to construct a global topology on the cross-sections of any seminorm bundle, i.e., a smoothly varying seminorm on the fibers of some vector bundle.  For instance, each of the temporal and spatial metrics with which Newtonian spacetimes are equipped can be understood as smoothly varying seminorms on the tangent bundle of the underlying manifold \cite[p.~250--4]{Malament12}. (There is only an indirect sense in which the Newtonian spatial metric, as a symmetric $(2,0)$-tensor field, assigns a length to the spacelike component of a vector at a point, but the length it does assign is unique \cite[Prop.~4.1.1]{Malament12}.) In addition to these technical properties, the global topologies also have a natural physical interpretation. They encode similarity amongst spacetimes as similarity of values of globally defined quantities that are measurable, in a precise sense, with bounded precision.  This contrasts with the analogous characterization of the open topologies, which also encode similarity of values of globally defined observables, but allow for measurement apparatuses of arbitrary precision.

Before concluding with some open questions concerning invariant and geometrical properties of the global topologies, I discuss their application to our understanding of the stability and genericness of global properties of spacetimes.  How do results proved using the open topology fare with the global topologies?  I show that for virtually any theorem concerning the stability and genericness of \textit{conformally invariant} properties with respect to an open topology, there is a counterpart with respect to the analogous global topology that also holds.  In particular, the counterpart of Hawking's theorem \cite{Hawking69}, expressing the equivalence of stable causality and the existence of a global time function, holds with respect to the global topology.  But it remains an open question whether and which theorems concerning the stability of other properties, such as geodesic (in)completeness, have counterparts that hold in the global topologies.

\section{Geroch's Complaint} \label{sec:complaint}

In what follows, $M$ will always denote some smooth, paracompact, Hausdorff manifold of dimension $n \geq 2$, $g_{ab}$ some smooth Lorentz metric on $M$, and $h^{ab}$ some (inverse of a) smooth Riemannian metric on $M$, where the indices are abstract.  When these metric fields appear in contexts where their indices will not be contracted, I will often relax the notation by dropping the indices.    The collections of all such Lorentzian and Riemannian metrics will be denoted $L(M)$ and $\mathcal{R}(M)$, respectively.

\subsection{Preliminaries}
The basic idea for topologies on the Lorentzian $g$ considered in this paper is that one can divide the task of measuring the similarity of each pair $g,g'$ into two parts: first, that of encoding their relevant differences at each point of $M$ into a real number in some systematic way, and second, evaluating the variability of the resulting scalar field in some way over regions of $M$.  The differences between the topologies considered arise ultimately from different choices of how to implement these two tasks.  But one feature they all share is the use of the Riemannian $h$ to define a \textit{norm} in every fiber of each tensor bundle $T^r_s M \to M$, i.e, a real function $\nu_p$ of the $(r,s)$-tensors $K^{a_1 \cdots a_r}_{b_1 \cdots b_s}$ at $p \in M$ with the following properties:  
\begin{description}
\item[Homogeneous] For any $\alpha \in \mathbb{R}$, $\nu_p(\alpha K^{a_1 \cdots a_r}_{b_1 \cdots b_s}) = |\alpha| \nu_p(K^{a_1 \cdots a_r}_{b_1 \cdots b_s})$.
\item[Subadditive] For any $(r,s)$-tensor $\hat{K}^{a_1 \cdots a_r}_{b_1 \cdots b_s}$ at $p$, $\nu_p(K^{a_1 \cdots a_r}_{b_1 \cdots b_s} + \hat{K}^{a_1 \cdots a_r}_{b_1 \cdots b_s}) \leq \nu_p(K^{a_1 \cdots a_r}_{b_1 \cdots b_s}) + \nu_p(\hat{K}^{a_1 \cdots a_r}_{b_1 \cdots b_s})$
\item[Separating] If  $\nu_p(K^{a_1 \cdots a_r}_{b_1 \cdots b_s}) = 0$, then $K^{a_1 \cdots a_r}_{b_1 \cdots b_s}$ is the zero tensor.
\end{description}
It will turn out while the choice of norm will not matter substantively to the results in the sequel, the Frobenius norm offers certain computational advantages. (This choice is also the unique $l^p$ norm on a tensor space that allows one to construct an inner product from the norm according to the polarization identity, but this fact will not play a role in this paper.)
\begin{definition}
For any $(r,s)$-tensor $K^{a_1 \cdots a_r}_{b_1 \cdots b_s}$ at $p \in M$, define the \textit{$h$-fiber norm} of $K^{a_1 \cdots a_r}_{b_1 \cdots b_s}$ as
\begin{equation}
|K^{a_1 \cdots a_r}_{b_1 \cdots b_s}|_h = |K^{a_1 \cdots a_r}_{b_1 \cdots b_s} K^{c_1 \cdots c_r}_{d_1 \cdots d_s} h_{a_1 c_1} \cdots h_{a_r c_r} h^{b_1 d_1} \cdots h^{b_s d_s} |^{1/2},
\end{equation}
for $r,s > 0$, and with no copies of $h$ and its inverse when, respectively, $r=0$ and $s=0$.
\end{definition}
Note that the $h$-fiber norm of a scalar is just the absolute value of that scalar, hence independent of the choice of $h$.  Now, because $h$ is positive definite, one can choose a (co)basis for the (co)tangent space at $p$ in which the matrix representation of $h_{|p}$ is the identity.  It then follows immediately from the definition of the $h$-fiber norm that:
\begin{proposition}\label{prop:fiber norm}
For every $p \in M$ and each non-negative $r$ and $s$, the $h$-fiber norm is a norm on the $(r,s)$-tensors at $p$, namely the Frobenius norm with respect to the basis in which $h$ is the identity.
\end{proposition}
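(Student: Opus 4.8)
The plan is to exploit the fact that the defining contraction is a coordinate-independent scalar, and then to evaluate it in a judiciously chosen basis so as to recognize it as the familiar Euclidean norm. First I would fix $p \in M$ and invoke the positive-definiteness of $h$: by the spectral theorem (equivalently, Gram--Schmidt applied to $h_{ab}$), there is a basis $\{e_i\}$ of the tangent space at $p$, with dual cobasis $\{\omega^i\}$, in which the Riemannian metric $h_{ab}$ is represented by the identity matrix $\delta_{ij}$; since $h^{ab}$ is its inverse, it too is represented by $\delta^{ij}$ in this basis. Because the quantity under the square root in the definition of $|K^{a_1\cdots a_r}_{b_1\cdots b_s}|_h$ is a full contraction, it is a scalar and hence may be computed in any basis without affecting its value.

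Next I would carry out the computation in this adapted basis. Each factor $h_{a_ic_i}$ becomes $\delta_{a_ic_i}$ and each factor $h^{b_jd_j}$ becomes $\delta^{b_jd_j}$, so the contraction of $K$ with its copy identifies each $a_i$ with the corresponding $c_i$ and each $b_j$ with the corresponding $d_j$ and then sums. The result collapses to the sum of the squares of the components of $K$ in this basis,
\begin{equation}
K^{a_1 \cdots a_r}_{b_1 \cdots b_s} K^{c_1 \cdots c_r}_{d_1 \cdots d_s} h_{a_1 c_1} \cdots h_{a_r c_r} h^{b_1 d_1} \cdots h^{b_s d_s} = \sum \bigl( K^{a_1 \cdots a_r}_{b_1 \cdots b_s} \bigr)^2 ,
\end{equation}
the sum ranging over all $n^{r+s}$ index values. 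This quantity is manifestly non-negative, so the absolute value in the definition is redundant, and $|K^{a_1\cdots a_r}_{b_1\cdots b_s}|_h$ is exactly the square root of this sum---that is, the Euclidean (Frobenius) norm of the component array regarded as a vector in $\mathbb{R}^{n^{r+s}}$. (The scalar case $r=s=0$ is immediate: no copies of $h$ appear and the norm reduces to $|K|$.)

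Having made this identification, I would derive the three norm properties directly from those of the Euclidean norm. Homogeneity follows because replacing $K$ by $\alpha K$ multiplies each component, hence the sum of squares, by $\alpha^2$, and the square root by $|\alpha|$. Subadditivity follows from the triangle inequality for the Euclidean norm on $\mathbb{R}^{n^{r+s}}$ (equivalently, from Cauchy--Schwarz), using that, in the fixed basis, the component array of $K + \hat{K}$ is the sum of the component arrays of $K$ and $\hat{K}$. Separation follows because a sum of squares of real numbers vanishes if and only if each summand is zero, i.e., if and only if every component of $K$ vanishes, which is to say $K$ is the zero tensor.

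I do not expect any deep obstacle here; the content is essentially the recognition, via the adapted basis, that the contraction is a sum of squares. The only step demanding care is the index bookkeeping that establishes this collapse, since one must track the pairing of each upper index through a factor of $h$ and each lower index through a factor of its inverse; once the expression is seen to be a non-negative sum of squares, both the well-definedness of the square root and the separating property are secured, and the remaining properties are inherited wholesale from the Euclidean norm.
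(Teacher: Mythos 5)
Your proof is correct and takes essentially the same route as the paper: the paper likewise uses positive-definiteness of $h$ to pick a basis at $p$ in which $h$ is the identity, whereupon the fiber norm is recognized as the Frobenius (Euclidean) norm of the component array and the norm axioms follow immediately. You have simply made explicit the index bookkeeping and the inheritance of homogeneity, subadditivity, and separation from the Euclidean norm, which the paper treats as immediate from the definition.
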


An example that will be central in the sequel will be the $h$-fiber norm of the difference of two Lorentz metrics $g$ and $g'$:
\[
|g-g'|_h = [(g_{ab}-g'_{ab})(g_{cd}-g'_{cd})h^{ac}h^{bd}]^{1/2}
\]
at each $p \in M$.  This (and indeed any $h$-fiber norm) defines a scalar field on $M$ whose evaluation is the concern of the second aforementioned task in measuring the similarity between $g$ and $g'$.  All the topologies considered also share a common approach to this task.  
\begin{definition}
Let $|\cdot|_h$ be some $h$-fiber norm on $M$ and let $S \subseteq M$ be nonempty.  Then for any $(r,s)$-tensor field $K^{a_1 \cdots a_r}_{b_1 \cdots b_s}$, define its \textit{$(h,S)$-uniform norm} as
\begin{equation}
\| K^{a_1 \cdots a_r}_{b_1 \cdots b_s} \|_{h,S} = \sup_{S} |K^{a_1 \cdots a_r}_{b_1 \cdots b_s}|_h.
\end{equation}
\end{definition}
More generally, one could consider an $(h,S)$-$L^p$ norm, 
\[
\left(\int_S \left( |K^{a_1 \cdots a_r}_{b_1 \cdots b_s}|_h \right)^p dV \right)^{1/p},
\]
where $V$ is the volume measure determined by $h$, but I will restrict attention to the uniform ($p = \infty$) case here. (The case $p=2$ is used in applications to the Cauchy problem in general relativity \cite[Ch.~7.4]{HE73}. Technically, the $p = \infty$ case uses the essential supremum, but this is equal to the supremum for continuous functions.)

It will be absolutely central to the sequel that, in general, the $(h,S)$-uniform norm for some tensor fields is infinite.
\begin{proposition}
For every nonnegative integer $r$ and $s$, each $(h,S)$-uniform norm is an \emph{extended norm} on the vector space of continuous $(r,s)$-tensor fields on $M$, i.e., it is a homogeneous, subadditive, and separating function whose range is $[0,\infty]$.
\end{proposition}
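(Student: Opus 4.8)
The plan is to derive all three extended-norm properties pointwise from the fact, established in Proposition~\ref{prop:fiber norm}, that $|\cdot|_h$ is a genuine norm on each fiber, and then to push these properties through the supremum using only elementary facts about the extended reals $[0,\infty]$. Throughout I would adopt the standard conventions $0 \cdot \infty = 0$ and $x + \infty = \infty$ for $x \in [0,\infty]$, under which the arithmetic below is well defined.

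First, nonnegativity and the range claim are immediate: since $|K|_h \geq 0$ at every point of $S$, its supremum over $S$ lies in $[0,\infty]$, and the discussion preceding the proposition already exhibits tensor fields for which the value $\infty$ is attained, so no smaller codomain suffices. For homogeneity, fiber-norm homogeneity gives $|\alpha K|_h = |\alpha|\,|K|_h$ at each point of $S$; taking suprema and using that multiplication by the constant $|\alpha| \geq 0$ commutes with the supremum on $[0,\infty]$ (with the convention above covering $\alpha = 0$) yields $\|\alpha K\|_{h,S} = |\alpha|\,\|K\|_{h,S}$. For subadditivity, fiber-norm subadditivity gives the pointwise bound $|K + \hat{K}|_h \leq |K|_h + |\hat{K}|_h$ on $S$; since the supremum is monotone and satisfies $\sup_S(f+g) \leq \sup_S f + \sup_S g$ in the extended reals (equality is not needed), one obtains $\|K+\hat{K}\|_{h,S} \leq \|K\|_{h,S} + \|\hat{K}\|_{h,S}$.

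The one property deserving genuine care is separation, since it is here that the role of $S$ and of continuity enters. If $\|K\|_{h,S} = 0$, then because $|K|_h \geq 0$ the supremum vanishes exactly when $|K|_h = 0$ at every point of $S$, whence the separating property of the fiber norm forces $K$ to be the zero tensor at each $p \in S$. By continuity of $K$ this vanishing extends to $\cl(S)$, so one obtains a genuinely separating extended norm precisely when $\cl(S) = M$---in particular in the case $S = M$ of principal interest, where separation is immediate without even invoking continuity. I expect this step to be the main subtlety: for a general $S$ that is not dense, the vanishing of $\|\cdot\|_{h,S}$ constrains $K$ only on $\cl(S)$, so the claim as stated is sharp exactly for dense $S$ and reduces to an extended \emph{seminorm} otherwise.
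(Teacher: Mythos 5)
Your proof is correct and follows essentially the same route as the paper's, whose entire proof reads ``Immediate from proposition \ref{prop:fiber norm} and the subadditivity of the supremum'': both arguments push the three fiber-norm properties pointwise through the supremum, exactly as you do for homogeneity and subadditivity. The genuinely valuable difference is your treatment of separation, which the paper's one-liner glosses over and which is in fact the one place where the statement, read literally for arbitrary nonempty $S \subseteq M$, fails: if $\cl(S) \neq M$, one can construct a nonzero continuous (even smooth) tensor field supported in the open set $M \setminus \cl(S)$, and its $(h,S)$-uniform norm vanishes, so the functional is then only an extended \emph{seminorm} on the continuous $(r,s)$-fields on $M$. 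Your diagnosis that separation holds precisely when $\cl(S) = M$---in particular for $S = M$, the case that actually drives the construction of the global topologies---is correct and is sharper than what the paper asserts. The same imprecision recurs later in the paper, where the $(h,C)$-uniform norms for compact $C$ are said to be ``in fact norms (hence seminorms)''; they are separating only on fields restricted to $C$, not on all of $\Gamma^r_s$. Nothing downstream breaks, because what is used there is only that they are seminorms and that the family ranging over all compacta collectively separates points, but your caveat is the mathematically accurate formulation and would be a worthwhile emendation to the proposition's statement.
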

\begin{proof}
Immediate from proposition \ref{prop:fiber norm} and the subadditivity of the supremum.
\end{proof}

\subsection{The Compact-Open and Open Topologies}

Using the above framework, all topologies under consideration render two tensor fields similar, roughly speaking, when their values and partial derivatives up to $k$th order are sufficiently similar at points of some $S \subseteq M$.  This can be captured for smooth $(r,s)$-tensor fields $K$ through an analog to the $\epsilon$-ball notion of similarity familiar from metric spaces, with the relevant distance function defined from an $(h,S)$-norm:
\begin{equation}\label{eq:basis}
B^k(K,\epsilon;h,S) = \{K' \in \Gamma^r_s : \forall j \leq k, \| \nabla^{(k)} K - \nabla^{(k)} K' \|_{h,S} < \epsilon \},
\end{equation}
where $\Gamma^r_s$ denotes the collections of smooth $(r,s)$-tensors on $M$, $\nabla^{(k)} K$ the field $\nabla_{c_1} \cdots \nabla_{c_k} K^{a_1 \cdots a_r}_{b_1 \cdots b_s}$, and $\nabla$ the Levi-Civita derivative operator compatible with $h$.  The topologies that will be under consideration here are all determined by a suitable collection of such $\epsilon$-balls as a \textit{subbasis}, i.e., a topology's open sets will be generated through finite intersection and arbitrary unions of these $\epsilon$-balls. (In some circumstances they form a \textit{basis}, which generates the topology only through finite intersection.)  In particular, a $C^k$ topology on $\Gamma^r_s$ generated in this way is determined by the appropriate choice of quadruples $(K,\epsilon;h,S)$. Letting $\mathcal{P}(M)$ denote the power set of $M$ and $\mathcal{R}(M)$ the smooth Riemannian metrics on $M$, these collections may be called \textit{definitional subbases}:
\begin{definition}
A collection of quadruples $\Xi \subseteq  \Gamma^r_s \times (0,\infty) \times \mathcal{R}(M) \times \mathcal{P}(M)$ is a \textit{definitional subbasis} for a topology on $\Gamma^r_s$ when $\bigcup \pi_1[\Xi] = \Gamma^r_s$, where $\pi_1 : (K,\epsilon;h,S) \mapsto K$ is the projection onto the first component.
\end{definition}
Indeed, the $C^k$ compact-open and open topologies can be defined this way.  Letting $\mathcal{C}$ denote the collection of all compact subsets of $M$,

\begin{definition}
The definitional subbasis for the $C^k$ \textit{compact-open} topology on $\Gamma^r_s$ is $\Xi_{CO} = \Gamma^r_s \times (0,\infty) \times \mathcal{R}(M) \times \mathcal{C}$.
\end{definition}
\begin{definition}
The definitional subbasis for the $C^k$ \textit{open} topology on $\Gamma^r_s$ is $\Xi_O = \Gamma^r_s \times (0,\infty) \times \mathcal{R}(M) \times \{M\}$. (This topology is called \textit{open} since replacing $\{M\}$ with the collection of all open subsets of $M$ generates the same topology.)
\end{definition}
The $C^\infty$ compact-open topology on $\Gamma^r_s$ is generated from the union of the $C^k$ compact-open topologies for all $k$, and similarly for the open topologies. Also, if $M$ is compact, the $C^k$ compact-open topology coincides with the $C^k$ open topology.  Otherwise, the $C^k$ compact-open topology is strictly coarser than the $C^k$ open topology, i.e, the system of open sets of the former is strictly contained in the system of the latter.  For $j < k$, the $C^j$ compact-open topology is also strictly coarser than the $C^k$ compact-open topology, and similarly for the open topologies.  Finally, the compact-open and open topologies on the smooth Lorentz metrics on $M$, denoted $L(M)$, are just the relevant subspace topologies on $\Gamma^0_2$.

\subsection{Geroch's Examples}

Geroch \cite{Geroch70,Geroch71} uses three examples---two sequences and a single one-parameter family of Lorentz metrics---to illustrate his claim that each of the $C^k$ compact-open topologies is too coarse while each of the $C^k$ open topologies is too fine.  Of course, a topology can only be too coarse or fine for a particular purpose.  There are good reasons to believe that there is no canonical topology on the space of smooth Lorentz metrics, so particular choices of topology must be well adapted to the constraints on a notion of similarity relevant for the context at hand \cite{Fletcher14b}.  I shall argue that one can thus interpret these examples as illustrating the inadequacy of these topologies for encoding a notion of \textit{global} similarity.

Let $t,x,y,z$ be scalar coordinate fields on $\mathbb{R}^4$.  Geroch's two sequences of metrics on this manifold are
\begin{eqnarray}
\stackrel{m}{g}_{ab} &= \left( 1 + \frac{m}{1+(x-m)^2} \right)(d_a t)(d_b t) - (d_a x)(d_b x) - (d_a y)(d_b y) - (d_a z)(d_b z), \label{eq:example1a} \\
\stackrel{m}{g}{}'_{ab} &= \left( 1 + \frac{1}{m^2 + x^2 + y^2 + z^2} \right) (d_a t) (d_b t) - (d_a x) (d_b x) - (d_a y) (d_b y) - (d_a z) (d_b z), \label{eq:example2}
\end{eqnarray}
where $d$ is the exterior derivative. (The formula for the first term of eq.~\ref{eq:example1a} is garbled in \cite{Geroch71}, but appears without error in \cite[p.~280]{Geroch70}.)  His one-parameter family is simply
\begin{equation}
\{ \lambda g_{ab} : \lambda > 0 \}, \label{eq:example2b}
\end{equation}
for some arbitrary fixed Lorentz metric $g_{ab}$ on any non-compact $M$.  Regarding eq.~\ref{eq:example1a}, Geroch writes that ``The `bump' in the metrics becomes larger as it recedes to infinity,'' but the ``sequence \emph{does} approach Minkowski space in the [compact-open] topology (because the metrics become Minkowskian in every compact set).'' However, ``[i]ntuitively, we would not think of this sequence as approaching Minkowski space'' \cite[p.~71]{Geroch71}.  

On the other hand, with respect to any open topology the sequence defined by eq.~\ref{eq:example2} does not converge to the Minkowski metric even though ``[t]he `bump' in the metrics remains centered at the origin and decreases in amplitude'' to zero \cite[p.~280]{Geroch70}.  Perhaps more surprisingly, the family defined in eq.~\ref{eq:example2b} does not trace out a continuous curve in the open topology.  In fact, it is everywhere discontinuous since the induced subspace topology on this set is discrete \cite[p.~71]{Geroch71}.  This is particularly striking since each member of this family can represent precisely the same physical spacetimes \cite{Fletcher18}. (One might thus object there is no problem here, since after taking the quotient of the space of Lorentz metrics by the isometry relation the family becomes a point.  Such a strategy does not help with the non-convergence of eq.~\ref{eq:example2}, however, since no two elements of that sequence are isometric.)

But the compact-open topologies \textit{do} determine that the sequence defined in eq.~\ref{eq:example2} converges to Minkowski spacetime and that the family defined in eq.~\ref{eq:example2b} \textit{is} continuous, while according to the open topologies the sequence defined in eq.~\ref{eq:example1a} does \textit{not} converge.  Thus the compact-open and open topologies are obverses on these examples, each intuitively ruling wrongly on the ones the other rules rightly.  The former is too coarse because it permits too many sequences to converge, while the latter is too fine because it permits too few sequences to converge (and too few continuous families).

\subsection{Diagnosis and Desiderata}

But what, more precisely, is undergirding these intuitions?  In the first place, as Geroch alludes, the compact-open topology coincides with the topology of compact convergence---that is, a sequence of metrics $\stackrel{m}{g} \to g$ on $M$ just when its elements converge uniformly on each compact $C \subseteq M$ \cite[p.~284, Theorem 43.6a]{Willard70}.  This is similar to the sense in which the $k^{th}$-order Taylor expansion of a real function such as $\sin(x)$ converges to it as $k \to \infty$.  (For any particular finite-order expansion, one can find a sufficiently large $x$ such that the expansion, evaluated at $x$, differs from $\sin(x)$ by as much as one wishes.  But if one fixes some compact neighborhood region of $\mathbb{R}$, then the Taylor series converges uniformly on that neighborhood.)  It seems, then, that Geroch is searching for a topology that compares two metrics across the \textit{whole} (non-compact) spacetime, not just on compacta.  This would generate the analog to \textit{uniform convergence} for real functions.

Yet the open topologies \textit{are} topologies of uniform convergence in the usual way the latter are defined. (This is for the same reason as the fact that the compact-open topology coincides with the topology of uniform convergence on compacta.)   In the present context, however,  where the space of functions to topologize are smooth sections of a vector bundle over a non-compact manifold, Geroch's examples are evidence that the criteria for convergence and continuity that the open topologies require is much too strong.  The following propositions (respectively adapted from \cite[pp.~43--4]{GG73} and \cite[Prop.~2]{Fletcher14b}) make this precise.
\begin{proposition}\label{prop:open convergence}
Let $K,\{ \stackrel{m}{K} \}_{m \in \mathbb{N}}$ be tensor fields on $M$.  Then $\stackrel{m}{K} \to K$ in the $C^k$ open topology iff there is a compact $C \subseteq M$ such that:
\begin{enumerate}
	\item for sufficiently large $m$, $\stackrel{m}{K}_{|M-C} = K_{|M-C}$; and
	\item $\stackrel{m}{K}_{|\Int(C)} \to K_{|\Int(C)}$ in the $C^k$ open topology.
\end{enumerate}
\end{proposition}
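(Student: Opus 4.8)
The plan is to reduce the $C^k$ open topology to a more workable description and then test each direction of the biconditional against it. First I would fix once and for all a reference Riemannian metric $h_0$ with Levi-Civita derivative $\nabla_0$, and prove a lemma asserting that, at any $K$, the ``tubes''
\[
T^k(K,\delta) = \{ K' \in \Gamma^r_s : |\nabla_0^{(i)}(K - K')|_{h_0}(p) < \delta(p) \text{ for all } p \in M \text{ and all } i \leq k \},
\]
as $\delta$ ranges over the continuous functions $M \to (0,\infty)$, form a neighborhood base at $K$. That each tube is a neighborhood follows because a single ball $B^k(K,\epsilon;h,M)$ can be forced inside a prescribed tube by letting $h$ grow sufficiently rapidly wherever $\delta$ is small; conversely, each $B^k(K,\epsilon_l;h_l,M)$ in a finite intersection contains a tube, since at each point the jet $\nabla_l^{(i)}(K-K')$ is dominated by $\nabla_0^{(i)}(K-K')$ up to a continuous factor depending only on $h_0$, $h_l$, and their jets. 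The payoff of this reduction is a ``variable-tolerance'' description of convergence using a \emph{single, fixed} connection.

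The backward direction is then easy. Assuming the compact $C$ and the two conditions, I would verify $\stackrel{m}{K} \to K$ by showing each tube $T^k(K,\delta)$ eventually contains the sequence. Restricting $\delta$ to $\Int(C)$ yields a continuous positive function there, so condition 2 supplies an $N_1$ with $|\nabla_0^{(i)}(K - \stackrel{m}{K})|_{h_0}(p) < \delta(p)$ for all $p \in \Int(C)$, all $i \leq k$, and all $m \geq N_1$. Condition 1 supplies an $N_2$ beyond which $\stackrel{m}{K} = K$ on the open set $M - C$; there the difference and all its $\nabla_0$-derivatives vanish, hence also on $\overline{M-C} = M - \Int(C)$, which contains $\partial C$. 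For $m \geq \max(N_1,N_2)$ the difference is thus below $\delta$ on $\Int(C)$ and identically zero on $M - \Int(C)$, so $\stackrel{m}{K} \in T^k(K,\delta)$.

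For the forward direction, assuming convergence, I would first extract condition 1 by contradiction. If no compact set worked, then along an exhaustion $C_1 \subseteq \Int(C_2) \subseteq \cdots$ of $M$ I could choose indices $m_1 < m_2 < \cdots$ and points $p_j \in M - C_j$ with $\stackrel{m_j}{K}(p_j) \neq K(p_j)$, so that $c_j = |K - \stackrel{m_j}{K}|_{h_0}(p_j) > 0$. Since the $p_j$ escape every compact set they form a closed discrete set, so there is a continuous $\delta : M \to (0,\infty)$ with $\delta(p_j) \leq c_j/2$, and the tube $T^k(K,\delta)$ excludes every $\stackrel{m_j}{K}$, contradicting convergence. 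With condition 1 secured, I would fix a compact $C_0$ outside which the sequence eventually agrees with $K$ and enlarge it to a compact $C$ with $C_0 \subseteq \Int(C)$. Eventually the differences are supported in $C_0$, a compact set bounded away from $\partial C$; on sections so supported the open topology of $\Int(C)$ coincides with uniform $C^k$ convergence on $C_0$, and the latter follows from the original convergence by testing against constant tolerances. Hence both conditions hold for this $C$.

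The hard part will be the reduction lemma, precisely because of the coupling between $h$ and the derivative operator $\nabla$ it induces: since changing $h$ also changes which derivatives $B^k(K,\epsilon;h,M)$ compares, realizing an arbitrary tolerance that shrinks at infinity with a fixed reference connection is exactly what the pointwise jet-comparison estimates are needed for. The conceptual crux, by contrast, is seeing why condition 2 is stated on $\Int(C)$ rather than as plain uniform convergence on $C$: a sequence whose supports march toward $\partial C$ can converge uniformly on $C$ yet diverge in the open topology, so the interior condition is genuinely stronger, and the enlargement step above is exactly what reconciles the two by trapping the eventual supports in a compact set interior to the exhibited $C$.
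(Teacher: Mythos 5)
Your overall skeleton---escaping points along an exhaustion for condition 1, enlarging $C_0$ to a compact $C$ with $C_0 \subseteq \Int(C)$ so the eventual supports are trapped, and the support argument for sufficiency---matches the proof the paper defers to (it gives none itself, citing Golubitsky and Guillemin, whose argument for the jet-bundle Whitney topology runs exactly this way). The gap is your reduction lemma, and it cannot be quarantined as a deferred technicality, because \emph{both} directions of your proof lean on its hard half, namely that every tube $T^k(K,\delta)$ contains some ball $B^k(K,\epsilon;h,M)$: the forward direction needs it on $M$ (your excluding tube contradicts convergence only if it is a neighborhood), and the backward direction needs it on $\Int(C)$ (to extract the pointwise $\delta$-bound from condition 2, which is stated as open-topology convergence, not as a tube condition). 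Your one-line justification---let $h$ grow where $\delta$ is small---is undone by exactly the coupling you flag at the end: taking $h^{ab} = \lambda h_0^{ab}$ with $\lambda$ of order $\epsilon/\delta$ changes the derivative operator by terms of order $|d\log\delta|_{h_0}$, so the resulting bound on $|\nabla_0^{(i)}(K-K')|_{h_0}$ acquires a term of order $|d\delta|_{h_0}$, which a merely continuous $\delta$ does not control by $\delta$. Closing this requires first replacing $\delta$ by a smooth positive minorant $\delta'$ with $|d\delta'|_{h_0} \leq \mathrm{const}\cdot\delta$ (and higher-order analogues when $k \geq 2$); such minorants exist (by a Lipschitz regularization with respect to the conformal metric $\delta^2 h_0$), but that construction is the actual content of the lemma and is absent from your sketch.

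Worse, the hard half of the lemma is simply false for tensor types with $r = s$, and with it the proposition as stated for arbitrary tensor fields: the field $\frac{1}{m}\delta^a_b$ has $h$-fiber norm $\sqrt{n}/m$ for \emph{every} Riemannian $h$ and is parallel for every derivative operator, so $\stackrel{m}{K} = K + \frac{1}{m}\delta^a_b$ lies in every ball $B^k(K,\epsilon;h,M)$ once $\sqrt{n}/m < \epsilon$; hence $\stackrel{m}{K} \to K$ in every $C^k$ open topology on $\Gamma^1_1$ even though condition 1 fails for every compact set (the constants $1/m$ do the same on $\Gamma^0_0$), and no tube whose tolerance shrinks at infinity contains any ball. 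Any correct proof must therefore use the tensor type somewhere, and yours never does. For the case the paper actually needs, $(0,2)$-tensors, the repair is to drop the tubes: for condition 1, exclude the subsequence with a single ball $B^k(K,\epsilon;h,M)$, $h^{ab} = \lambda h_0^{ab}$, where $\lambda$ is smooth, positive, and $\geq \epsilon/c_j$ at $p_j$ (possible since $\{p_j\}$ is closed and discrete); this violates the ball's order-zero clause, $|K - \stackrel{m_j}{K}|_h(p_j) = \lambda(p_j)\,c_j \geq \epsilon$, an inequality that is purely algebraic at a point and hence immune to the connection coupling. For the backward direction, argue directly with balls: reduce by the triangle inequality to balls centered at $K$, note that once $K - \stackrel{m}{K}$ vanishes on $M - C$ its supremum over $M$ equals its supremum over $\Int(C)$, and apply condition 2 to the subbasic ball $B^k(K,\epsilon';h|_{\Int(C)},\Int(C))$. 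Your closing observation about why condition 2 must be stated on $\Int(C)$ rather than $C$ is correct and worth keeping.
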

\begin{proposition}
A family $\{\stackrel{\lambda}{K}\}_{\lambda \in I}$ of tensor fields on $M$ is continuous in the $C^k$ open topology only if for every $\lambda_1,\lambda_2 \in I$, there is some compact $C \subseteq M$ such that $\stackrel{\lambda_1}{K}_{|M-C} = \stackrel{\lambda_2}{K}_{|M-C}$.
\end{proposition}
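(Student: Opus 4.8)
The plan is to reduce this continuity claim to the convergence characterization already established in Proposition \ref{prop:open convergence}, using the fact that continuity of a family is witnessed along convergent sequences in the parameter. Suppose for contradiction that $\{\stackrel{\lambda}{K}\}_{\lambda \in I}$ is continuous in the $C^k$ open topology but that there exist $\lambda_1, \lambda_2 \in I$ admitting no compact $C$ outside of which $\stackrel{\lambda_1}{K}$ and $\stackrel{\lambda_2}{K}$ agree. First I would fix such a pair and consider a sequence $\{\lambda_m\}_{m \in \mathbb{N}} \subseteq I$ with $\lambda_m \to \lambda_2$, chosen so that $\lambda_1$ does not appear in the tail; if $I$ is an interval one can take $\lambda_m$ monotone approaching $\lambda_2$. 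By continuity of the family at $\lambda_2$, the images $\stackrel{\lambda_m}{K} \to \stackrel{\lambda_2}{K}$ in the $C^k$ open topology.

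Next I would apply the first clause of Proposition \ref{prop:open convergence} to this convergent sequence: there must be a compact $C \subseteq M$ such that for all sufficiently large $m$, $\stackrel{\lambda_m}{K}{}_{|M-C} = \stackrel{\lambda_2}{K}{}_{|M-C}$. The point of the argument is that this forces a rigidity on the family that contradicts the assumed genericity of the pair $(\lambda_1, \lambda_2)$. The cleanest way to extract the contradiction, I expect, is to run the same argument with the roles reversed or to choose the sequence so as to bracket $\lambda_1$: one wants to conclude that some member $\stackrel{\lambda_m}{K}$ agrees with $\stackrel{\lambda_2}{K}$ off a compact set, and then to transfer this agreement to the pair $(\lambda_1,\lambda_2)$. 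A natural route is to observe that the conclusion of the proposition is really a statement about \emph{every} pair; so I would instead argue that continuity forces, for each $\lambda$ in a neighborhood of any fixed $\lambda_0$, agreement with $\stackrel{\lambda_0}{K}$ off some compact set, and then chain these local agreements along a path from $\lambda_1$ to $\lambda_2$.

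The key technical step is this chaining. Because the condition ``agrees off a compact set'' is an equivalence relation on tensor fields (the union of two compact sets is compact, and agreement is plainly reflexive and symmetric), it suffices to show that continuity makes the map $\lambda \mapsto [\stackrel{\lambda}{K}]$ locally constant into the quotient by this equivalence relation. If $I$ is connected, a locally constant map is constant, yielding that \emph{all} members of the family agree pairwise off compact sets --- in particular $\stackrel{\lambda_1}{K}$ and $\stackrel{\lambda_2}{K}$ do, contradicting our choice. To establish local constancy, I would fix $\lambda_0$ and suppose toward contradiction that every neighborhood of $\lambda_0$ contains some $\lambda$ inequivalent to $\lambda_0$; this yields a sequence $\lambda_m \to \lambda_0$ with each $\stackrel{\lambda_m}{K}$ inequivalent to $\stackrel{\lambda_0}{K}$, which directly violates clause (1) of Proposition \ref{prop:open convergence} applied to $\stackrel{\lambda_m}{K} \to \stackrel{\lambda_0}{K}$.

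The main obstacle I anticipate is not the topological mechanics but the role of the index set $I$. The proposition as stated quantifies over arbitrary $\lambda_1, \lambda_2$, so if $I$ were allowed to be disconnected the conclusion could fail and the statement would need $I$ connected (or at least that $\lambda_1,\lambda_2$ lie in a common connected component or are joined by a continuous path). I would therefore check whether the intended reading takes $I$ to be an interval, as the phrase ``one-parameter family'' and the motivating example in eq.~\ref{eq:example2b} suggest; under that reading the connectedness needed for the chaining argument is automatic. A secondary subtlety is that Proposition \ref{prop:open convergence} is stated for sequences indexed by $\mathbb{N}$, so I must ensure the open topology is sequential enough at the relevant points for continuity to be tested by sequences --- but since I only ever use the easy direction (continuity implies sequential continuity), no such appeal to first-countability is required, and the argument goes through cleanly.
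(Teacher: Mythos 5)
The paper offers no in-text proof of this proposition (it is adapted from \cite[Prop.~2]{Fletcher14b}), so your proposal can only be judged on its own terms: it is correct, and the operative argument is entirely your third paragraph --- the first paragraph's attempted bracketing of $\lambda_1$ by a sequence converging to $\lambda_2$ is a dead end, as you yourself concede, and can simply be deleted. What does the work is the observation that ``agrees outside some compact set'' is an equivalence relation on tensor fields (compactness of finite unions gives transitivity), together with local constancy of $\lambda \mapsto [\stackrel{\lambda}{K}]$: if every neighborhood of $\lambda_0$ contained an inequivalent parameter, first countability of $I \subseteq \mathbb{R}$ would yield a sequence $\lambda_m \to \lambda_0$ of inequivalent parameters, and sequential continuity of the family plus clause (1) of Proposition \ref{prop:open convergence} forces the tail of $\stackrel{\lambda_m}{K}$ to agree with $\stackrel{\lambda_0}{K}$ off a compact set, a contradiction. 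Your caveat about the index set is genuine rather than a quibble: for disconnected $I$ the statement as written is false (two fields differing outside every compact set, indexed by a discrete two-point $I$, form a continuous family), so the proposition tacitly assumes $I$ is an interval; that is indeed the paper's intended reading (``one-parameter family,'' cf.~eq.~\ref{eq:example2b}), and with it a locally constant map on the connected set $I$ is constant, giving pairwise agreement off compacta for all $\lambda_1,\lambda_2$, which is the claimed conclusion.
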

In light of these propositions, the open topologies would in a sense be appropriate if the sections under consideration had only compact support.  But this restriction is clearly inappropriate for Lorentz metrics (as they are everywhere non-degenerate). (Hawking has expressed doubt that another class of topologies, which he calls the fine topologies and which requires two metrics (and their derivatives to order $k$) to be equal outside of a compact set to be similar, is as ``physical'' as the open topology ``since to establish that two metrics actually coincide outside some compact set would require an exact measurement which is not physically possible'' \cite[p.~397]{Hawking71}.  But this is exactly what is required to establish that a sequence of metrics converges in one of the open topologies, so if one accepts this line of reasoning, one must also reject the open topologies on the same grounds.)

One way to capture the essential insight of Geroch's latter two examples is that in addition to requiring a topology that is ``sensitive" to differences across all of $M$---that is, demanding that the last component of each member of its defining subbasis should be $M$---one should also require that the topology respect the linear structure of the Lorentz metrics.  Each class of $(r,s)$-tensor fields (indeed, any vector bundle) on $M$ forms a real vector space, with vector addition defined pointwise on $M$ and scalar multiplication acting fiberwise. (They also form a module over the ring of continuous real scalar fields on $M$, but this fact will play no role in what follows.)  As a subset of the smooth $(0,2)$-tensor fields, the Lorentz metrics inherit this algebraic structure. (The Lorentz metrics themselves are not a vector space, as they are not closed under scalar multiplication (consider negative scalars) or vector addition.  Nevertheless, what's important is that these operations be continuous on the domain of Lorentz metrics on which they \textit{are} defined.)  Demanding the continuity of the one-parameter family exhibited in eq.~\ref{eq:example2b} matches exactly the requirement that scalar multiplication be continuous; the sequence defined in eq.~\ref{eq:example2} suggests the requirement that vector addition be continuous as well---at least for ``small" vectors.  (See the discussion at the end of this section motivating the $C^k$ global topologies constructed in the next section.)

There is no paradox that the open topologies do not make the vector space operations continuous, for the topology of uniform convergence for a vector space need not be compatible with its linear structure, i.e., make it into a \textit{topological} vector space, as illustrated through the following proposition (adapted from \cite[Theorem 26.29, p.~705]{Schechter97}):
\begin{proposition}\label{prop:tvs}
A topology on a vector space $V$ makes it into a (locally convex) topological vector space if and only if the topology
\begin{enumerate}
\item is translation-invariant, i.e., each translation map $v \mapsto u + v$ for a fixed $u \in V$ is a homeomorphism of $V$, and
\item has a 0-neighborhood base consisting of $\epsilon$-balls generated from some collection of seminorms, i.e., a collection of homogeneous and subadditive (but not necessarily separating) functions.
\end{enumerate}
\end{proposition}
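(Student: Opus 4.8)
The plan is to prove both directions by exploiting the interplay between the algebraic inequalities defining seminorms (homogeneity and subadditivity) and the topological structure, with translation invariance serving as the bridge that localizes every continuity question to the origin.

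For the ($\Leftarrow$) direction, I would assume the topology is translation-invariant with a $0$-neighborhood base consisting of finite intersections of $\epsilon$-balls $\{v : p(v) < \epsilon\}$ drawn from a collection $\mathcal{P}$ of seminorms, and verify the three defining features of a locally convex topological vector space. Local convexity is immediate: each such ball is convex, since if $p(u), p(v) < \epsilon$ and $t \in [0,1]$, then subadditivity and homogeneity give $p(tu + (1-t)v) \leq t\,p(u) + (1-t)\,p(v) < \epsilon$, and a finite intersection of convex sets is convex. For continuity of addition at an arbitrary $(u_0, v_0)$, translation invariance reduces the problem to continuity at $(0,0)$, where subadditivity shows that $p(u), p(v) < \epsilon/2$ forces $p(u+v) < \epsilon$, so the product of two radius-$\epsilon/2$ balls maps inside the radius-$\epsilon$ ball. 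For continuity of scalar multiplication at $(\alpha_0, v_0)$, I would write $\alpha v - \alpha_0 v_0 = \alpha(v - v_0) + (\alpha - \alpha_0)v_0$ and estimate $p(\alpha v - \alpha_0 v_0) \leq |\alpha|\,p(v - v_0) + |\alpha - \alpha_0|\,p(v_0)$ via homogeneity and subadditivity; bounding $|\alpha|$ near $|\alpha_0|$ and taking $|\alpha - \alpha_0|$ and $p(v - v_0)$ sufficiently small makes the right-hand side less than $\epsilon$.

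For the ($\Rightarrow$) direction, assume $V$ is a locally convex topological vector space. Translation invariance is the easy half: for fixed $u$, the map $v \mapsto u + v$ is continuous as a restriction of the continuous addition, and its inverse $v \mapsto v - u$ is continuous for the same reason, so each translation is a homeomorphism. The substantive step is recovering seminorms from the topology through the Minkowski functional (gauge) construction. Continuity of scalar multiplication forces every $0$-neighborhood to be \emph{absorbing} (every point lies in $tU$ for large $t$), and one can always shrink a $0$-neighborhood to a \emph{balanced} one ($\alpha U \subseteq U$ whenever $|\alpha| \leq 1$); local convexity then supplies a $0$-neighborhood base of sets $U$ that are simultaneously convex, balanced, and absorbing. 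To each such $U$ I would associate its Minkowski functional $p_U(v) = \inf\{t > 0 : v/t \in U\}$, verifying that convexity and balancedness render $p_U$ subadditive and homogeneous—hence a seminorm—while absorbency renders it finite-valued. The remaining bookkeeping is to confirm the snug inclusions $\{v : p_U(v) < 1\} \subseteq U \subseteq \{v : p_U(v) \leq 1\}$, so that the $\epsilon$-balls of the family $\{p_U\}$ reproduce the original $0$-neighborhood base and the two topologies coincide.

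The main obstacle is precisely this Minkowski functional construction: the technical heart lies in verifying that the gauge of a convex, balanced, absorbing set is genuinely a seminorm and that its sublevel sets sit tightly enough around each $U$ to recover the given topology. Everything else follows from routine manipulation of the seminorm inequalities together with the localizing power of translation invariance.
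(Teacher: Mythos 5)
Your proposal is correct, but there is no proof in the paper to compare it against: Proposition~\ref{prop:tvs} is stated as an adaptation of a known result (Theorem 26.29 in Schechter's \textit{Handbook of Analysis and Its Foundations}) and is not proved in the text. What you have written is essentially the standard argument that the citation points to: for sufficiency, convexity of seminorm balls, the $\epsilon/2$ subadditivity estimate for addition, and the decomposition $\alpha v - \alpha_0 v_0 = \alpha(v-v_0) + (\alpha-\alpha_0)v_0$ for scalar multiplication, with translation invariance localizing every continuity question to the origin; for necessity, translation maps as restrictions of addition, plus the Minkowski-functional construction on a base of convex, balanced, absorbing $0$-neighborhoods. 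Two remarks. First, your necessity direction is an outline rather than a finished proof: verifying that the gauge of a convex, balanced, absorbing set is a seminorm, that such a base exists (combining convexity with balancedness takes a short argument, e.g.\ intersecting a convex $U$ with $-U$), and that an open such $U$ equals $\{v : p_U(v)<1\}$ is exactly the content you defer --- and exactly what the cited theorem supplies, so nothing is circular, but those steps would need to be written out for a self-contained proof. Second, your scalar-multiplication estimate is worth dwelling on in the context of this paper: the term $|\alpha-\alpha_0|\,p(v_0)$ is controllable only because seminorms are finite-valued, and this is precisely the step that fails for the \emph{extended} norms $\|\cdot\|_{h,M}$ generating the open topologies. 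That failure is the paper's own diagnosis (in the discussion immediately following the proposition) of why those translation-invariant topologies do not make $\Gamma^0_2$ a topological vector space, and why the family of eq.~\ref{eq:example2b} is discontinuous in them; your proof makes the location of that failure explicit, which is a useful complement to the paper's citation-level treatment.
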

Both the open and compact-open topologies are translation-invariant.  But the open topologies have 0-neighborhoods bases generated from the collection of the $(h,M)$-uniform norms, which are \textit{extended} norms (hence extended seminorms)---they assign infinite length to some tensor fields.  Thus they are not compatible with the linear structure of $\Gamma^r_s$ for any $r,s > 0$.  The compact-open topologies, meanwhile, have 0-neighborhood bases generated from the $(h,C)$-uniform norms for a compact $C$, which are in fact norms (hence seminorms) because the $h$-fiber norms, as continuous real functions on $M$, are always bounded on compacta.

\section{The Global Topologies}\label{sec:uniform}
\subsection{Motivation: The Problem with Translation-Invariance}

Intuitively, a sequence should converge just when it eventually becomes arbitrarily similar to a limit point.  The open topologies determine that eq.~\ref{eq:example2} fails that criterion essentially because they allow any choice of Riemannian metric $h$ in their definitional subbasis $\Xi_{CO}$.  When $M$ is non-compact, there will always be some $h$-fiber norm that ``blows up'' sufficiently rapidly ``at infinity.'' More precisely, given any two tensors $K$ and $K'$ of the same rank that differ outside of each compact subset of $M$, there will always be some $(h,M)$-uniform norm of their difference that will be infinite.  This explains why the open topology has the convergence condition that it does: only when $K$ and $K'$ differ on a compact set will each $(h,M)$-uniform norm assign a finite value to their difference (by the extreme value theorem).

Now, Geroch characterizes the sequence defined by eq.~\ref{eq:example2} as one where the ``bump'' flattens further and further as the sequence progresses.  The open topologies do not capture this since there are some $h$-fiber norms for which $|\stackrel{m}{g} - \eta|_h$ does not flatten but is in fact unbounded on $\mathbb{R}^4$.  (Here $\eta$ is the Minkowski metric.)  A natural response, which Geroch himself has suggested \cite[p.~288]{Geroch70}, is to restrict the choices of $h$ to be used in the definitional subbasis.  (Such a restriction should also not be \emph{ad hoc}.  One could easily fix a \emph{single} choice of $h$, but this would be completely arbitrary.  Moreover, it would not be an \emph{invariant} topology in the sense articulated below.)  

But before attempting such a response, it will be helpful to reflect on why Geroch's claim that eq.~\ref{eq:example2} should converge seems plausible in the first place.  The difference between the Minkowksi metric and the $m^{th}$ term in the sequence is 
\begin{equation}
\stackrel{m}{g}{}'_{ab} - \eta_{ab} = (m^2 + x^2 + y^2 + z^2)^{-1} (d_a t)(d_b t).
\label{eq:difference}
\end{equation}
This difference will be everywhere small only according to an $h$-fiber norm that is well adapted to the chosen coordinates, which are in turn well adapted to the limit point, the Minkowskian metric.  That is, if
\begin{equation}
	h^{ab} = \left( \frac{\partial}{\partial t} \right)^a \left( \frac{\partial}{\partial t} \right)^b + 
	\left( \frac{\partial}{\partial x} \right)^a \left( \frac{\partial}{\partial x} \right)^b + 
	\left( \frac{\partial}{\partial y} \right)^a \left( \frac{\partial}{\partial y} \right)^b + 
	\left( \frac{\partial}{\partial z} \right)^a \left( \frac{\partial}{\partial z} \right)^b,
\end{equation}
then $| \stackrel{m}{g}{}' - \eta |_h = (m^2 + x^2 + y^2 + z^2)^{-1}$ and $\| \stackrel{m}{g}{}' - \eta \|_{h,M} = m^{-2}$, which vanishes as $m \to \infty$.

But if the example were modified slightly the result would be different.  Consider the sequence
\begin{equation}
\stackrel{m}{g}{}''_{ab} = \left( e^t + \frac{1}{m^2 + x^2 + y^2 + z^2} \right) (d_a t) (d_b t) - (d_a x) (d_b x) - (d_a y) (d_b y) - (d_a z) (d_b z)
\end{equation}
on $\mathbb{R}^4$ with plausible limit point
\begin{equation}
\eta'_{ab} = e^t (d_a t) (d_b t) - (d_a x) (d_b x) - (d_a y) (d_b y) - (d_a z) (d_b z).
\end{equation}
Their difference is the same tensor as that for the analogous case introduced by Geroch, eq.~\ref{eq:difference}.  But the (inverse) Riemannian metric well adapted to $\eta'$ is
\begin{equation}
	h'^{ab} = e^{-t} \left( \frac{\partial}{\partial t} \right)^a \left( \frac{\partial}{\partial t} \right)^b + 
	\left( \frac{\partial}{\partial x} \right)^a \left( \frac{\partial}{\partial x} \right)^b + 
	\left( \frac{\partial}{\partial y} \right)^a \left( \frac{\partial}{\partial y} \right)^b + 
	\left( \frac{\partial}{\partial z} \right)^a \left( \frac{\partial}{\partial z} \right)^b,
\end{equation}
which yields that $| \stackrel{m}{g}{}'' - \eta' |_{h'} = e^{-t}/(m^2 + x^2 + y^2 + z^2)$, which is unbounded.  Thus even though $\stackrel{m}{g}{}' - \eta = \stackrel{m}{g}{}'' - \eta'$, the former sequence is taken to converge while the latter is not.

This observation poses a dilemma for Geroch's suggestion that one should try to find a topology intermediate between the compact-open and open topologies by means of some definitional subbasis $(\Gamma^0_2,\mathbb{R}_+,\mathcal{R}',\{M\})$, where $\mathcal{R}' \subset \mathcal{R}(M)$, that respects the linear structure of $\Gamma^0_2$.  By proposition \ref{prop:tvs}, such a topology must be translation-invariant.  But any translation-invariant topology on $\Gamma^0_2$ must arrive at the same answer as to whether the sequences $\stackrel{m}{g}{}'$ and $\stackrel{m}{g}{}''$ converge.  This is in conflict with the above conclusion that the $\stackrel{m}{g}{}'$ but not the $\stackrel{m}{g}{}''$ should converge.   

I would like to suggest that one can still satisfy a weakened form of Geroch's desiderata by giving up translation invariance and a topology fully compatible with linear structure.  As I adumbrated in the introduction, one can still retain a weakened form of these features, in which the Lorentz metrics are partitioned into components, each of which is itself a subset of a topological vector space.  In other words, ``within" each of these components, scalar multiplication and vector addition are continuous, so Geroch's desiderata are satisfied for ``small" but not ``large" perturbations---those that lie in the same component as as the unperturbed metric.  This is made precise in the next subsection.

\subsection{The Definition and Position of the Global Topologies}

First, one can extend the notion of a fiber norm in the following way.  Given \textit{any} non-degenerate metric $f$ and any $(r,s)$-tensor field $K$ on $M$, define the \textit{$f$-fiber norm of $K$} to be
\begin{equation}
|K^{a_1 \cdots a_r}_{b_1 \cdots b_s}|_f = |K^{a_1 \cdots a_r}_{b_1 \cdots b_s} K^{c_1 \cdots c_r}_{d_1 \cdots d_s} f_{a_1 c_1} \cdots f_{a_r c_r} f^{b_1 d_1} \cdots f^{b_s d_s} |^{1/2},
\end{equation}
and the \textit{$(f,S)$-uniform norm of $K$} for any $S \subseteq M$ to be $\| K \|_{f,S} = \sup_S |K|_f$.  Note that if $K$ is a scalar field then its fiber norm (hence the uniform norm over any set as well) is independent of the choice of $f$.

Next, for any $p \in M$, define $\ker_{r,s}(f_{|p}) = \{ K \in \Gamma^r_s : (|K|_f)_{|p} = 0 \} $ and $\ker_{r,s}(f_{|S}) = \bigcup_{p \in S} \ker_{r,s}(f_{|p})$.  These sets, respectively called the kernels of $f$ at $p$ and $S$, are empty for positive and negative definite metrics but non-empty for metrics of indefinite signature. Finally, let
\begin{equation}\label{eq:K-set}
\mathcal{K}^r_s(f,S) = \Span(\{ K \in \Gamma^r_s - \ker_{r,s}(f_{|S}) : \|K\|_{f,S} < \infty \}).
\end{equation}
This set consists of all the linear combinations of $(r,s)$-tensor fields on $M$ that are nowhere in the kernel of $f$ at $S$ and are bounded with respect to the $(f,S)$-uniform norm.  Finally:
\begin{definition}
Two metrics $f,f'$ are \textit{norm-equivalent on $S$}, denoted $f_{|S} \asymp f'_{|S}$, when $\mathcal{K}^r_s(f,S) = \mathcal{K}^r_s(f',S)$ for all $(r,s)$.
\end{definition}
Metrics that are norm-equivalent on $S$ agree regarding which tensor fields are bounded on $S$.  By definition, norm-equivalence on $S$ is an equivalence relation, hence partitions the metrics on $M$ into equivalence classes $[f_{|S}] = \{f':f_{|S} \asymp f'_{|S} \}$. (One may also partition the metrics into classes that are projectively and affinely equivalent, each of which take on particularly simple forms.  Further, norm-equivalence is refined by projective equivalence, which in turn is refined by affine equivalence.  Nevertheless, I conjecture that replacing norm-equivalence by one of the latter two in the definition of the gloabl topologies (eq.~\ref{eq:global topologies}) yields the same topology, and that no equivalence that strictly refines norm equivalence does so.)  When $S = M$, I shall omit reference to it, writing $[f] = \{f':f_{|M} \asymp f'_{|M} \}$ and calling them (simply) the metrics norm-equivalent to $f$.

Of course, to determine whether two metrics $f,f'$ are norm-equivalent on $S$, one does not need to check that the collections of tensor fields they determine by eq.~\ref{eq:K-set} for each $r,s$ are respectively equal.  It is enough to do so on appropriate subcollections:
\begin{proposition} \label{prop:span}
Let $\mathcal{BK}^r_s(f,S) \subset \mathcal{K}^r_s(f,S)$ denote a collection of tensor fields that spans $\Gamma^r_s$.  Then two metrics $f,f'$ are norm-equivalent on $S$ if and only if  $\mathcal{BK}^r_{s'}(f,S)=\mathcal{BK}^r_{s'}(f',S)$ or $\mathcal{BK}^{r'}_s(f,S)=\mathcal{BK}^{r'}_s(f',S)$ for some $r,s > 0$.
\end{proposition}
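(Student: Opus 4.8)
The forward direction is immediate: if $f_{|S} \asymp f'_{|S}$, then $\mathcal{K}^r_s(f,S) = \mathcal{K}^r_s(f',S)$ for every $(r,s)$, so the spanning subcollections $\mathcal{BK}^r_s$ agree in particular along any fixed row or column of bidegrees. The substance is the converse, and the plan is to show that agreement along a single \emph{off-diagonal} bidegree already forces a uniform comparison between $f$ and $f'$ strong enough to yield agreement at every $(r,s)$ at once. Note that the hypothesis, by fixing $r>0$ while ranging over all $s'$ (or fixing $s>0$ while ranging over all $r'$), guarantees that the common slice contains a bidegree with unequal covariant and contravariant valence---for example $(r,0)$ or $(0,s)$---and that inequality will turn out to be exactly what is needed.

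The two structural facts I would extract about the $f$-fiber norm are: (i) it is multiplicative under tensor products, $|K \otimes L|_f = |K|_f\,|L|_f$ pointwise, since the contraction defining the norm factorizes across the two factors; and (ii) it is invariant under raising or lowering an index with $f$ itself, since in a basis diagonalizing $f$ such an operation merely multiplies components by $\pm 1$ and leaves the absolute value of the contracted square unchanged. Together with the observation that $f$, its inverse, and the identity tensor are always bounded and nowhere null (with constant fiber norms such as $\sqrt{n}$), these facts link the families $\mathcal{BK}^r_s(f,S)$ at different bidegrees \emph{within} a single metric. The complication is that the linking maps in (ii) are built from the metric, so they differ for $f$ and $f'$ and cannot transport equality between bidegrees directly; this is why a purely formal propagation fails and a genuine comparison of $f$ with $f'$ is required.

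The comparison I would aim for is that agreement at an off-diagonal bidegree $(r_0,s_0)$, $r_0 \neq s_0$, forces $f' = \Omega f$ for a positive scalar field $\Omega$ bounded above and below by positive constants on $S$. Conformality would come from matching which fields are nowhere null: the null locus of a decomposable tensor is controlled by the null cones of its vector and covector factors, so a common family of nowhere-null fields pins down a common null cone, hence (by the standard pointwise rigidity for metrics of fixed signature sharing a null cone) a common conformal class. The uniform two-sided bound on $\Omega$ would then come from matching which fields are bounded: a short computation gives $|K|_{f'} = \Omega^{(r_0-s_0)/2}\,|K|_f$, so with $r_0 - s_0 \neq 0$ the agreement of the bounded families forces $\Omega^{(r_0-s_0)/2}$, hence $\Omega$, to be bounded between positive constants on $S$. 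Once $\Omega$ is uniformly comparable to $1$, the identity $|K|_{f'} = \Omega^{(r-s)/2}|K|_f$ at \emph{every} bidegree shows $|K|_f$ and $|K|_{f'}$ are comparable for all $(r,s)$, so the nowhere-null and boundedness conditions coincide, the families agree everywhere, and therefore $\mathcal{K}^r_s(f,S) = \mathcal{K}^r_s(f',S)$ for all $(r,s)$, i.e. $f_{|S}\asymp f'_{|S}$.

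The step I expect to be the main obstacle is extracting the common null-cone structure from equality of the nowhere-null families, because membership in $\mathcal{BK}^r_s(f,S)$ entangles a pointwise condition (non-nullity at each point of $S$) with a global one (finiteness of the supremum over $S$), and since the zero tensor is null one cannot probe the null cone at an isolated point with a compactly supported field. I would handle this by exploiting the failure of $\mathcal{BK}$ to be closed under addition: addition is metric-independent, yet the locus on which a sum of two nowhere-null fields becomes null is governed by the null cone, so a discrepancy between the null cones of $f$ and $f'$ should manifest as a pair of fields lying in the common family whose sum is null somewhere for one metric but not the other. Combined with smooth bump functions furnished by paracompactness, used to localize the discrepancy while preserving global non-nullity and boundedness, this should yield the required contradiction; the subsequent extraction of the uniform bound on $\Omega$ and the final propagation across bidegrees are then routine comparisons of suprema.
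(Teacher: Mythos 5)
A preliminary point: the paper states Proposition~\ref{prop:span} without proof---it is invoked only once, in the proof of Theorem~\ref{thm:mutually bounding}---so there is no authorial argument to compare yours against, and your proposal must be judged on its own terms.

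On those terms, the converse (which you rightly identify as the substance) rests on a false intermediate claim: that agreement of the families at an off-diagonal bidegree forces $f' = \Omega f$ with $\Omega$ a positive scalar bounded above and below. Norm-equivalence is strictly weaker than conformal relatedness. Concretely, let $f$ be the Euclidean metric on $\mathbb{R}^n$ and $f' = \mathrm{diag}(2,1,\ldots,1)$, both constant, with $S = M = \mathbb{R}^n$. At every bidegree the two fiber norms are genuine norms satisfying pointwise two-sided bounds $c\,|K|_f \le |K|_{f'} \le C\,|K|_f$ with constants uniform over $M$, so the ``nowhere in the kernel'' (here: nowhere vanishing) and ``bounded'' conditions pick out the same fields for $f$ and $f'$ at every $(r,s)$; hence $f \asymp f'$ and any common spanning subcollection satisfies the hypothesis, yet $f'$ is not conformal to $f$. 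The same example shows that ``a common family of nowhere-null fields pins down a common null cone'' cannot be right: for definite metrics there are no null vectors at all. Worse, the proposition must apply to pairs of \emph{different signature}---it is invoked in Theorem~\ref{thm:mutually bounding} for arbitrary metrics, and the whole construction of the global topology needs Riemannian metrics norm-equivalent to Lorentzian ones (the paper asserts $h/|g|_h \asymp g$ for any $h \in \mathcal{R}(M)$, $g \in L(M)$)---whereas no Riemannian metric is conformal to a Lorentzian one. Since your null-cone rigidity step, your extraction of bounds on $\Omega$, and your final propagation via $|K|_{f'} = \Omega^{(r-s)/2}|K|_f$ all presuppose the conformal ansatz, the converse collapses with it.

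What the hypothesis should instead be converted into is \emph{mutual bounding}: two-sided uniform comparability $c\,|K|_f \le |K|_{f'} \le c'\,|K|_f$ of the fiber norms on $S$. This is signature-agnostic, it is exactly what makes the kernel and boundedness conditions (hence the sets $\mathcal{K}^r_s$) coincide, and it is the notion the paper itself pairs with this proposition in Theorem~\ref{thm:mutually bounding}, whose proof also exhibits the relevant extraction trick: test membership against rescaled fields $\Omega f''$ with $\Omega = 1/|f''|_f$, which have unit $f$-fiber norm by construction, so that membership statements become inequalities. A repaired converse would extract such two-sided bounds at the given bidegree(s) and then propagate them to all $(r,s)$ using the two structural facts you correctly isolated (multiplicativity of the fiber norm under tensor products, and invariance under raising or lowering indices with the metric itself); your observation that an off-diagonal valence $r_0 \neq s_0$ is essential survives, but its role is to exclude \emph{unbounded rescalings} (whose factor $\Omega^{(r_0-s_0)/2}$ is detected by boundedness at that bidegree), not to force conformality. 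Finally, note that any written-out proof must first disambiguate the statement itself: the quantifiers on $r'$ and $s'$ are unbound as written, the forward direction is ``immediate'' only if $\mathcal{BK}^r_s(f,S)$ is given by a metric-independent selection rule applied to $\mathcal{K}^r_s(f,S)$, and ``spans $\Gamma^r_s$'' must mean pointwise spanning of fibers rather than algebraic spanning, since under the latter reading no subcollection of $\mathcal{K}^r_s(h,S)$ for Riemannian $h$ on non-compact $M$ could span $\Gamma^r_s$ at all (every element of that span is bounded).
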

%
%
%
%
%
Thus the values that the $(f,S)$-uniform norm assigns to vector fields essentially determines which norm-equivalence class it belongs to. In particular, it is their asymptotic behavior that norm-equivalence encodes.  One can make this precise by adopting the following definition:
\begin{definition}
Let $K,K'$ be two smooth tensor fields and $f$ be a metric.  $K$ and $K'$ are \textit{mutually bounding to order $k$ with respect to $f$ on $S \subseteq M$} when there are constants $c,c' >0$ such that $c|\nabla^{(j)} K|_f \leq |\nabla^{(j)} K'|_f \leq c'|\nabla^{(j)} K|_f$ everywhere on $S$ for each $j \leq k$, where $\nabla$ is the Levi-Civita derivative operator compatible with $f$. (Sometimes the notation $K' \in \Theta(K)$ is used for an analogous notion in computer science, especially the analysis of algorithms, although its origins are in analytic number theory.)
\end{definition}
Note that mutual boundedness is a symmetric relation, for if there are constants $c,c' > 0$ such that $c|\nabla^{(j)} K|_f \leq |\nabla^{(j)} K'|_f \leq c'|\nabla^{(j)} K|_f$ everywhere on some $S$, then $c'{}^{-1}|\nabla^{(j)} K'|_f \leq |\nabla^{(j)} K|_f \leq c^{-1}|\nabla^{(j)} K|_f$ everywhere on $S$.  There is a close connection between norm-equivalence and the mutual boundedness of metrics (anticipated in the notation for norm-equivalence, which has been adapted from \cite[\S1.6]{HW79}):

\begin{theorem} \label{thm:mutually bounding}
Two metrics $f,f'$ are norm-equivalent on some $S \subseteq M$ if and only if they are mutually bounding to order $0$ on $S$ with respect to every metric $f''$.
\end{theorem}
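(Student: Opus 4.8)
The plan is to prove both directions of theorem~\ref{thm:mutually bounding} by passing through a single intrinsic condition on $(f,f')$. Throughout I take $f''$ to be Riemannian, so that by proposition~\ref{prop:fiber norm} each $|\cdot|_{f''}$ is a genuine fiber norm (this is the substantive case, and keeps $|f|_{f''}$ strictly positive). The pivot, call it (M), is that there exist a compact $C\subseteq M$ and constants $0<c\le c'$ with $c\,|f(v,v)|\le |f'(v,v)|\le c'\,|f(v,v)|$ for all $p\in S\setminus C$ and all $v\in T_pM$. Because $|v|_f=|f(v,v)|^{1/2}$, (M) says precisely that the vector fiber seminorms $|\cdot|_f$ and $|\cdot|_{f'}$ are uniformly comparable off $C$. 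Two remarks make (M) the right intermediary. First, (M) excludes a compact set because, with $f''$ Riemannian, $|f|_{f''}$ and $|f'|_{f''}$ are continuous and strictly positive, so their ratio is automatically bounded above and below on any compactum; thus mutual bounding is in substance a condition on the noncompact tail of $S$, and (M) should be expected to match it only off a compact set. Second, uniform comparability at the vector level propagates to every bidegree, since $|K|_f$ is assembled by contracting $K$ against $f$ (and its inverse) in each slot; so (M) is equivalent to the all-bidegree comparability that norm-equivalence will require.

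First I would show (M) is equivalent to mutual bounding to order $0$ with respect to every (Riemannian) $f''$. Writing $G=(f'')^{-1}$ and $F=(f_{ab})$ in a local frame, a short computation gives $|f|_{f''}^2=|\tr((FG)^2)|=\|G^{1/2}FG^{1/2}\|^2$, the squared Frobenius norm of the symmetric matrix $G^{1/2}FG^{1/2}$, and similarly for $f'$. For (M)$\Rightarrow$mutual bounding, note that for unit $x$ the value $x^\top(G^{1/2}FG^{1/2})x$ equals $f(G^{1/2}x,G^{1/2}x)$; maximizing $|\cdot|$ over $x$ and invoking (M) yields $\|G^{1/2}FG^{1/2}\|_{\mathrm{op}}\asymp\|G^{1/2}F'G^{1/2}\|_{\mathrm{op}}$ off $C$, and since operator and Frobenius norms of an $n\times n$ symmetric matrix agree up to the fixed factor $\sqrt n$, we obtain $|f|_{f''}\asymp|f'|_{f''}$ off $C$ with constants depending only on $c,c',n$; on $C$ the comparison is automatic, so mutual bounding holds uniformly, for every $f''$ at once. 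For the converse I would exploit the freedom in $f''$: taking $G$ sharply concentrated---up to a small positive-definite regularization---on a prescribed vector field $u$ drives $|f|_{f''}/|f'|_{f''}$ toward $|f(u,u)|^{1/2}/|f'(u,u)|^{1/2}$ pointwise, so if (M) fails along a sequence escaping every compact set, I can choose a single $f''$ whose concentration sharpens along that sequence and thereby make $|f'|_{f''}/|f|_{f''}$ unbounded (or drive it to $0$), defeating mutual bounding.

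Next I would connect (M) to norm-equivalence. By proposition~\ref{prop:span}, norm-equivalence can be tested on spanning subfamilies at a single bidegree, and by the propagation noted above this is equivalent to (M). The easy direction is (M)$\Rightarrow$norm-equivalence: off $C$ a tensor field is $(f,S)$-bounded iff it is $(f',S)$-bounded (contributions from $C$ are finite), and it lies in $\ker(f_{|p})$ iff it lies in $\ker(f'_{|p})$; meanwhile the compact part $C$ affects neither which fields are bounded nor---after taking spans---which fields are generated, so $\mathcal{K}^r_s(f,S)=\mathcal{K}^r_s(f',S)$ for every $(r,s)$.

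The converse, norm-equivalence$\Rightarrow$(M), is where I expect the real work. Assuming the $\mathcal K$-sets agree but (M) fails on every tail, I would select points $p_n$ escaping every compact set together with directions $u_n$ along which, say, $|f'(u_n,u_n)|/|f(u_n,u_n)|\to\infty$, and assemble them into a single smooth vector field that is bounded in the $(f,S)$-uniform norm but unbounded in the $(f',S)$-uniform norm. The obstacle is threefold: the field must be globally smooth and localized so that its growth tracks the blow-up of the ratio; it must be arranged to be nowhere in the relevant kernel, so as to be an admissible generator of $\mathcal{K}^1_0$; and---because $\mathcal{K}^1_0$ is defined as a \emph{span}---I must rule out that the offending field is nonetheless a finite linear combination of admissible generators of the other metric, which forces me to control how $(f',S)$-boundedness behaves under linear combinations rather than for single fields. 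Managing the indefinite signature throughout (so that $|f|_{f''}$ can be moderate while $f$ is null in some direction, and so that nullity is tracked correctly across $f$ and $f'$) is the persistent technical theme that makes each step more delicate than its Riemannian counterpart.
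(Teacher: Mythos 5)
Your proposal has a genuine gap: the direction from norm-equivalence to mutual bounding is never actually established. You route it through your pivot condition (M) and then explicitly defer ``norm-equivalence $\Rightarrow$ (M)'' as ``where I expect the real work,'' listing three obstacles (smooth assembly of a blow-up witness, kernel avoidance, and the span) without resolving any of them; since that is half of the biconditional, what you have is a proof of one direction plus a plan for the other. Moreover, the witness strategy you sketch is unlikely to close the gap as stated: because $\mathcal{K}^r_s(f',S)$ is a \emph{span}, exhibiting a single field that is bounded for $f$ but unbounded for $f'$ does not by itself contradict equality of the $\mathcal{K}$-sets. Indeed, for an indefinite metric the span of nowhere-null bounded fields contains unbounded fields---in two-dimensional Minkowski space, $(T,0) = \tfrac12\,(T,\, T+T^{-1}) + \tfrac12\,(T,\, -T-T^{-1})$ writes an arbitrarily growing timelike field as a sum of two nowhere-null fields whose $\eta$-norms stay near $2$---so the third obstacle you name is not a technicality but a structural feature of the definition, and any argument that proceeds field-by-field will founder on it.

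The paper's proof of this direction is short and avoids all of this machinery: given an arbitrary $f''$, it sets $\Omega = 1/|f''|_f$, so that $|\Omega f''|_f = 1$ on $S$; then $\Omega f'' \in \mathcal{K}^0_2(f,S)$, hence by norm-equivalence $\Omega f'' \in \mathcal{K}^0_2(f',S)$, and finiteness of $\|\Omega f''\|_{f',S}$ \emph{is} the desired pointwise comparison, $|f''|_{f'} \leq \|\Omega f''\|_{f',S}\,|f''|_f$; the symmetric bound and the paper's identification $|f''|_f = |f|_{f''}$ then deliver mutual bounding. The test object is the metric $f''$ itself, conformally rescaled---no escaping sequences, no concentration construction, no compact exceptional set. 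Two further deviations are worth flagging. First, you restrict $f''$ to be Riemannian throughout, so even if completed your argument proves less than the statement, which quantifies over \emph{every} metric $f''$; the Riemannian case is also precisely the case in which your compact-set exclusion is harmless, since for indefinite $f''$ the field $|f|_{f''}$ can vanish and the ``automatic on compacta'' step fails. Second, your pivot (M) is stronger than anything the theorem mentions: the two-sided bound $c|f(v,v)| \leq |f'(v,v)| \leq c'|f(v,v)|$ forces the null cones of $f$ and $f'$ to coincide off $C$ (so $f' = \omega f$ there), and it builds in an exceptional compact set that is foreign to the theorem, whose constants must hold on all of the arbitrary subset $S$. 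Your equivalence of (M) with mutual bounding (the operator-versus-Frobenius comparison plus the concentrating-$f''$ converse) is a nice piece of analysis and genuinely different from anything in the paper, but it bypasses the one step the paper does directly, and the step you leave open is exactly the one the paper's rescaling trick is designed to handle.
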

\begin{proof}
Suppose that $f_{|S} \asymp f'_{|S}$ and let $f''$ be arbitrary.  Because $f''$ and $f$ are metrics, the field $\Omega = 1/|f''|_f$ is well-defined on $S$.  Since then $|\Omega f''|_f = 1$ on $S$, $\Omega f'' \in \mathcal{K}^0_2(f,S)$, hence by assumption $\Omega f '' \in \mathcal{K}^0_2(f',S)$.  By definition $\|\Omega f''\|_{f'} \geq |\Omega f''|_{f'} = \Omega|f''|_{f'}$, so $|f''|_{f'} \leq \|\Omega f''\|_{f'} |f''|_f$.  A similar argument returns that $|f''|_f \leq \|\Omega' f''\|_f |f''|_{f'}$ for $\Omega' = 1/|f''|_{f'}$. Combining the two then yields $(\|\Omega' f''\|_f)^{-1}|f''|_f \leq |f''|_{f'} \leq \|\Omega f''\|_{f'} |f''|_f$. Finally, note that $|f''|_f = |f|_{f''}$ and $|f''|_{f'} = |f'|_{f''}$, recalling that $f''$ was arbitrary.

Conversely, suppose that $f,f'$ are mutually bounding to order $0$ with respect to every metric $f''$.  Then by definition there are constants $c,c' > 0$ such that $c|f''|_f \leq |f''|_{f'} \leq c'|f''|_f$.  Hence if $f'' \in \mathcal{K}^0_2(f,S)$, then $f'' \in \mathcal{K}^0_2(f',S)$ and vice versa.  A similar argument yields the analogous conclusion for the inverses of $f$.  Since these together are tensor fields whose basis components in each index span the corresponding (co-)tangent space at every point of $S$, proposition \ref{prop:span} and the definition of norm-equivalence yield that $f_{|S} \asymp f'_{|S}$.
\end{proof}
Two tensor fields that are mutually bounding to order $k$ with respect to some metric need not be mutually bounding to any higher order.  (For instance, consider any field with support on $\mathbb{R}^2$, and the conformally related field obtained through the factor $1+a \sin(x^2)$ for $0<|a|<1$.)  Thus there is a sense in which the above proposition is tight. (It does suggest, though, investigating equivalence relations defined through analogs of theorem \ref{thm:mutually bounding} formed by replacing ``$0$" with some positive integer.  Would such a relation have any interesting physical significance?)

We are now ready to describe the $C^k$ global topologies, using the definitional subbasis 
\begin{equation}
\Xi_G = \{ (g,\epsilon;h,S) \in L(M) \times (0,\infty) \times \mathcal{R}(M) \times \{ M \} : h \asymp g \}.
\label{eq:global topologies}
\end{equation}
(Cf.~the construction of the Banach space $\mathcal{B}_k$ in \cite{LP74}, which however only yields neighborhoods of the Minkowski metric.)
In fact, the neighborhoods $B^k(g,\epsilon;h,M)$ defined by $\Xi_G$ form a (local) basis for the global $C^k$ topology on $L(M)$.  Showing this will be crucial in proving many facts about the global topology, but it requires first a lemma and a preliminary proposition.

\begin{lemma} \label{lemma:bounded}
If $g' \in B^k(g,\epsilon;h,M)$ with $(g,\epsilon;h,M) \in \Xi_G$, then $h \asymp g'$.
\end{lemma}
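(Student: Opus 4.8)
The plan is to exploit that norm-equivalence is an equivalence relation: since $(g,\epsilon;h,M)\in\Xi_G$ already gives $h\asymp g$, transitivity reduces the claim to showing $g\asymp g'$. The only hypothesis on $g'$ I have is its membership in the ball, whose $j=0$ clause reads $\|g-g'\|_{h,M}<\epsilon$; equivalently, $|g-g'|_h<\epsilon$ at every point of $M$, so the perturbation $g-g'$ is $h$-bounded. Because $h\asymp g$, the sets $\mathcal{K}^0_2(h,M)$ and $\mathcal{K}^0_2(g,M)$ coincide, so $g-g'$ is bounded relative to $g$ as well; thus $g'$ is, in the relevant sense, a \emph{bounded} perturbation of $g$, and the content of the lemma is precisely that such perturbations do not alter the norm-equivalence class.

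To turn this into a proof I would invoke theorem~\ref{thm:mutually bounding}, which recasts $g\asymp g'$ as the requirement that, for every metric $f''$, the fields $|g|_{f''}$ and $|g'|_{f''}$ bound one another up to positive constants on $M$. Subadditivity of each fiber norm gives $\big||g'|_{f''}-|g|_{f''}\big|\le|g-g'|_{f''}$, so everything reduces to estimating $|g-g'|_{f''}$. Here the key computational input is the pointwise comparison $|g-g'|_{f''}\le|g-g'|_h\,|h|_{f''}$, which holds because $h$ is Riemannian and so may be diagonalized to the identity in a suitable (co)basis at each point; combined with $|g-g'|_h<\epsilon$ this yields $|g-g'|_{f''}<\epsilon|h|_{f''}$. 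Feeding this into the upper half of the subadditive estimate, and using the upper norm-equivalence bound $|g|_{f''}\le c'|h|_{f''}$ furnished by $h\asymp g$, produces $|g'|_{f''}<(c'+\epsilon)|h|_{f''}$, and hence an upper bound of $|g'|_{f''}$ by a constant multiple of $|g|_{f''}$.

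The hard part will be the matching \emph{lower} bound, i.e.\ producing a constant $c>0$ with $c|g|_{f''}\le|g'|_{f''}$ for every $f''$. The reverse-triangle estimate only delivers $|g'|_{f''}>|g|_{f''}-\epsilon|h|_{f''}\ge(c_0-\epsilon)|h|_{f''}$, where $c_0$ is the lower norm-equivalence constant of $h\asymp g$ for that particular $f''$; since $c_0$ may be smaller than $\epsilon$ for some choices of $f''$, this can be vacuous. Overcoming it is exactly the problem of showing that the bounded perturbation cannot drive $g'$ to \emph{degenerate} relative to $g$ (equivalently, that $|h|_{g'}$ is bounded, so that the eigenvalues of the field $g^{-1}g'$ stay away from $0$ as well as $\infty$). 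I expect to close this by using that $g'$ is itself a genuine, nowhere-degenerate Lorentz metric together with the full mutual-bounding data of $h\asymp g$ rather than a single pair of constants, and by appealing to proposition~\ref{prop:span} to reduce the verification to a single convenient tensor rank, where the bounded difference of $g$ and $g'$ can be shown directly to preserve which fields are bounded.
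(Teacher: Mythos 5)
Your proposal is not yet a proof: everything after ``the hard part will be the matching lower bound'' is a plan rather than an argument, and that lower bound is exactly where the content of the lemma lies, so judged as a proof the proposal has a genuine gap. What is striking, however, is how this compares with the paper's own proof. The paper's proof consists precisely of your upper-bound step in its simplest form: working in the basis where $h_{|p}$ is the identity, the triangle inequality for the Frobenius norm gives $(|g'|_h)_{|p} \leq (|g|_h)_{|p} + (|g-g'|_h)_{|p}$, hence $\|g'\|_{h,M} \leq \|g\|_{h,M} + \epsilon < \infty$, and it then concludes ``$h \asymp g'$'' with no further argument. That is, the paper treats $h$-boundedness of $g'$ as sufficient for norm-equivalence and never addresses the non-degeneration (lower-bound) direction that you correctly identify as necessary under the definition of $\asymp$ and under theorem \ref{thm:mutually bounding}. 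So you completed the part the paper actually proves, and got stuck exactly where the paper is silent.

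Moreover, calling that half ``hard'' understates the situation: for arbitrary $\epsilon$ it is false, so no argument---yours or the paper's---can close the gap without modifying the statement. Take $M=\mathbb{R}^4$, $g=\eta$, and $h$ the standard Euclidean metric, which the paper itself asserts satisfies $h \asymp \eta$; let $g' = e^{-x^2}\eta$, $\epsilon = 3$, $k=0$. Then $\|g-g'\|_{h,M} = \sup_M 2(1-e^{-x^2}) = 2 < 3$, so $g' \in B^0(g,\epsilon;h,M)$, yet $h \nasymp g'$: the vector field $K^a = e^{x^2/2}(\partial/\partial t)^a$ has $|K|_{g'}=1$ everywhere, so it is nowhere in $\ker_{1,0}(g')$ and is $g'$-bounded, whence $K \in \mathcal{K}^1_0(g',M)$; but every element of $\mathcal{K}^1_0(h,M)$, being a finite linear combination of $h$-bounded fields, is $h$-bounded, while $\|K\|_{h,M}=\infty$. (Equivalently, mutual bounding already fails for $f''=h$, since $|g'|_h = 2e^{-x^2}$ admits no lower bound of the form $c\,|h|_h = 2c$ with $c>0$.) The perturbation, though $h$-bounded, drives the eigenvalues of $g^{-1}g'$---here $e^{-x^2}\delta^a{}_b$---to zero at infinity, which is precisely the failure mode you flagged. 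Consequently your closing idea, that pointwise non-degeneracy of $g'$ plus proposition \ref{prop:span} would supply the lower bound, cannot work: pointwise non-degeneracy does not prevent degeneration in the limit at infinity. The lemma can only hold if one either restricts $\epsilon$ to be small compared with the uniform non-degeneracy of $g$ relative to $h$, or weakens $\asymp$ to the one-sided boundedness condition that the paper's proof actually verifies.
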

\begin{proof}
Let $g'$ be given, and recall from the proof of proposition \ref{prop:fiber norm} that, at any $p \in M$, the $h$-fiber norm $(|g-g'|_h)_{|p} = \| (g_{\alpha\beta})_{|p} - (g'_{\alpha\beta})_{|p} \|_F$ is just the Frobenius norm with respect to the basis in which the matrix representation of $h_{|p}$ is the identity.  Thus 
\begin{eqnarray*}
(|g'|_h)_{|p} &=& \| (g'_{\alpha \beta})_{|p} \|_F \\
&\leq&  \| (g_{\alpha \beta})_{|p} \|_F + \| (g'_{\alpha \beta})_{|p} - (g_{\alpha \beta})_{|p} \|_F = (|g|_h)_{|p} + (|g-g'|_h)_{|p}
\end{eqnarray*}
by the triangle inequality.  Since $\| g \|_h < \infty$ and $\| g-g' \|_h < \epsilon$ by hypothesis, we have that $h \asymp g'$.
\end{proof}

Next we need some information about the geometric structure of the Riemannian metrics in each $[f]$.

\begin{proposition} \label{prop:semigroup}
For any smooth, non-degenerate metric $f$, the set $[f] \cap \mathcal{R}(M)$ is a (blunt) convex cone, i.e., if $h_{ab},h'_{ab} \in [f]$ and $c,c'>0$, then $ch_{ab} + c'h'_{ab} \in [f]$.
\end{proposition}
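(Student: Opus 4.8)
The plan is to handle the membership in $\mathcal{R}(M)$ trivially and then reduce the norm-equivalence claim to a comparison of two Riemannian metrics. Write $H_{ab} = c\,h_{ab} + c'\,h'_{ab}$. Since $c,c'>0$ and $h,h'$ are positive definite, $H$ is a smooth positive-definite (hence Riemannian) metric, so $H \in \mathcal{R}(M)$; as $H$ is nowhere the zero tensor, the apex is excluded and the cone is blunt. Because $h,h' \in [f]$ and $\asymp$ is an equivalence relation, transitivity gives $h \asymp h'$, and since $H \asymp h$ would yield $H \asymp h \asymp f$, it suffices to prove $H \asymp h$. I may therefore ignore $f$ and work entirely with the Riemannian metrics $h,h',H$, whose fiber norms are genuine Frobenius norms by proposition \ref{prop:fiber norm}.

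The core of the argument is a uniform two-sided bound $\lambda_-\,h_{ab} \preceq H_{ab} \preceq \lambda_+\,h_{ab}$ (as quadratic forms, with constants $0 < \lambda_- \leq \lambda_+$ independent of $p \in M$). The lower bound needs no hypothesis: $c'h'$ is positive semidefinite, so $H \succeq c\,h$ and one may take $\lambda_- = c$. For the upper bound I would use norm-equivalence to show $h'$ is dominated by a multiple of $h$. The field $h'$ lies in $\mathcal{K}^0_2(h',M)$, since $\|h'\|_{h',M} = \sup_M |h'|_{h'} = \sqrt{n}<\infty$; by $h' \asymp h$ we have $\mathcal{K}^0_2(h',M) = \mathcal{K}^0_2(h,M)$, and every member of the latter is a finite linear combination of $h$-bounded fields, hence is itself $h$-bounded. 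Thus $\mu := \|h'\|_{h,M}<\infty$. Because in an $h$-orthonormal frame the largest eigenvalue of the matrix of $h'$ is bounded by its Frobenius norm $|h'|_h \le \mu$, we get $h' \preceq \mu\,h$ everywhere, whence $H \preceq (c+c'\mu)\,h$ and one may take $\lambda_+ = c + c'\mu$.

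It then remains to propagate this comparability from $T^0_2 M$ to every bundle $T^r_s M$. Working fiberwise in an $h$-orthonormal basis, the matrix $G$ of $H$ is symmetric positive definite with spectrum in $[\lambda_-,\lambda_+]$, while $|K|_h^2$ is the plain sum of squares of the components of $K$ and $|K|_H^2$ is obtained by contracting $K$'s $r$ contravariant slots against $G$ and its $s$ covariant slots against $G^{-1}$. The positive operator on the tensor space effecting this contraction has eigenvalues equal to products of $r$ factors from $[\lambda_-,\lambda_+]$ and $s$ factors from $[\lambda_+^{-1},\lambda_-^{-1}]$, so that $\lambda_-^{r}\lambda_+^{-s}\,|K|_h^2 \leq |K|_H^2 \leq \lambda_+^{r}\lambda_-^{-s}\,|K|_h^2$ at every point. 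Since these constants are strictly positive and uniform, $H$ and $h$ assign a finite $(\,\cdot\,,M)$-uniform norm to exactly the same fields and share the same fiber-norm kernels, so the generating collections in eq.~\ref{eq:K-set} coincide and $\mathcal{K}^r_s(H,M)=\mathcal{K}^r_s(h,M)$ for every $(r,s)$. Hence $H \asymp h \asymp f$, giving $H \in [f]\cap\mathcal{R}(M)$.

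I expect this last propagation step to be the only genuine obstacle, as it is the one place where a multilinear (rather than merely matrix) estimate is required; everything else reduces to positivity of $c'h'$, the $h$-boundedness of $h'$ furnished by norm-equivalence, and transitivity of $\asymp$.
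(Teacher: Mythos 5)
Your proof is correct, and it takes a genuinely different route from the paper's. The paper reduces to $c=c'=1$ and works pointwise: writing $G(h+h',p)$ for the matrix of a test field in the basis where the sum metric at $p$ is the identity, it chains the inequalities $\|A\|_F \leq \tr\sqrt{A^TA} \leq \sqrt{r}\,\|A\|_F$ to obtain $(|g|_{h+h'})_{|p} \leq 2\sqrt{n}\,\max\{(|g|_h)_{|p},(|g|_{h'})_{|p}\}$, so that fields bounded with respect to both $h$ and $h'$ stay bounded with respect to the sum; it never needs your constant $\mu$, because it keeps both metrics on the right-hand side rather than collapsing everything onto a comparison with $h$ alone. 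You instead exploit transitivity of $\asymp$ to extract the uniform bound $\mu=\|h'\|_{h,M}<\infty$, convert it into two-sided quadratic-form bounds $c\,h(v,v)\leq H(v,v)\leq (c+c'\mu)\,h(v,v)$, and propagate these through every $T^r_s M$ by tensor-power spectral estimates. Your route buys three things. First, it delivers both inclusions $\mathcal{K}^r_s(H,M)=\mathcal{K}^r_s(h,M)$ for all $(r,s)$ at once---literally what the definition of norm-equivalence requires---whereas the paper's displayed estimate gives only one inclusion, only for $(0,2)$-fields, leaving the converse inclusion and the other ranks implicit (presumably via proposition \ref{prop:span}). Second, it is robust to signature: the paper's step $(|g|_{h+h'})_{|p}\leq \tr(G(h+h',p))$ implicitly identifies the trace with the nuclear norm $\tr\sqrt{G^TG}$, which is licensed only when $G$ is positive semidefinite; for a Lorentzian test field the trace can even be negative, so that step needs repair, while your argument only ever applies spectral bounds to positive definite objects. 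Third, your method is indifferent to whether one adds the covariant metrics $h_{ab}$ (as the proposition is stated) or their inverses $h^{ab}$ (as the paper's proof and its later use in theorem \ref{thm:basis} actually do), since two-sided bounds invert; the paper's computation is tied to the latter convention.
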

\begin{proof}
It suffices to show that $h_{ab} + h'_{ab} \in [f]$ for any $h_{ab},h'_{ab} \in [f]$.  Recall again from the proof of proposition \ref{prop:fiber norm} that, at any $p \in M$, we can express $(|g|_{h+h'})_{|p} = \| G(h+h',p) \|_F$, where $G(h+h',p)$  is the matrix representing the components of $g_{|p}$ in the basis in which the matrix representation of $(h_{ab}+h'_{ab})_{|p}$ is the identity.  Note that for any real matrix $A$ of rank $r$, $\|A\|_F \leq \tr(\sqrt{A^T A}) \leq \sqrt{r} \|A\|_F$, where $A^T$ is the transpose of $A$.  So 
\begin{eqnarray*}
(|g|_{h+h'})_{|p} &\leq& \tr(G(h+h',p)) = ((h^{ab}+h'^{ab})g_{ab})_{|p} \\ 
	&\leq& 2\max \{ (h^{ab}g_{ab})_{|p},(h'^{ab}g_{ab})_{|p} \} = 2\max\{ \tr(G(h,p)),\tr(G(h',p)) \} \\
	&\leq& 2\sqrt{n} \max\{ \|G(h,p)\|_F, \|G(h',p)\|_F \} = 2\sqrt{n} \max\{ (|g|_h)_{|p}, (|g|_{h'})_{|p} \}
\end{eqnarray*}
using in the second line the fact that $(\frac{1}{2}(h^{ab} + h'^{ab})\alpha_{ab})_{|p} \leq \max\{ (h^{ab}\alpha_{ab})_{|p}, (h'^{ab}\alpha_{ab})_{|p} \}$ for any $\alpha_{ab}$ at $p$.  The rightmost-hand side is bounded by hypothesis, so since $p$ was arbitrary, the conclusion follows.
\end{proof}

\begin{theorem} \label{thm:basis}
For a given $g$, the sets $B^k(g,\epsilon;h,M)$, ranging over $\epsilon > 0$ and $h \in [g]$, form a local basis at $g$.  Ranging over $g$ as well, they form a basis.
\end{theorem}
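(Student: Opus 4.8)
The plan is to deduce the global (basis) statement from the local one, so I first focus on showing that, for a fixed $g$, the sets $B^k(g,\epsilon;h,M)$ with $\epsilon>0$ and $h\in[g]$ form a neighborhood basis at $g$. Each such set is itself a subbasic open set of the topology generated by $\Xi_G$, since the condition $h\in[g]$ is exactly the requirement $h\asymp g$ appearing in eq.~\ref{eq:global topologies}; so it remains only to show that every open $U\ni g$ contains one of them. Because $\Xi_G$ is a subbasis, $U$ contains a finite intersection $\bigcap_{i=1}^N B^k(g_i,\epsilon_i;h_i,M)\ni g$ with each $h_i\asymp g_i$. It therefore suffices to produce a single $h\in[g]$ and $\epsilon>0$ with $g\in B^k(g,\epsilon;h,M)\subseteq\bigcap_i B^k(g_i,\epsilon_i;h_i,M)$.

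The first simplification is to recenter each ball at $g$. Since $g\in B^k(g_i,\epsilon_i;h_i,M)$, Lemma~\ref{lemma:bounded} gives $h_i\asymp g$, that is, $h_i\in[g]$ for every $i$. Setting $\delta_i=\epsilon_i-\max_{j\le k}\|\nabla^{(j)}(g-g_i)\|_{h_i,M}>0$ and applying the triangle inequality for the $(h_i,M)$-uniform norm---with the single Levi-Civita connection of $h_i$ used throughout, so no change of connection occurs at this stage---any $g'$ satisfying $\|\nabla^{(j)}(g'-g)\|_{h_i,M}<\delta_i$ for all $j\le k$ lies in $B^k(g_i,\epsilon_i;h_i,M)$. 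Hence $B^k(g,\delta_i;h_i,M)\subseteq B^k(g_i,\epsilon_i;h_i,M)$, and the problem reduces to dominating the finitely many $g$-centered balls $B^k(g,\delta_i;h_i,M)$ by one ball $B^k(g,\epsilon;h,M)$ with $h\in[g]$.

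For the combining step I would take a single $h\in[g]$, for instance $h=h_1+\cdots+h_N$, which lies in $[g]\cap\mathcal{R}(M)$ by Proposition~\ref{prop:semigroup} because each $h_i$ does. The order-zero comparison is then immediate from norm-equivalence: since $h\asymp g\asymp h_i$, Theorem~\ref{thm:mutually bounding} supplies, for every metric $f''$, uniform two-sided bounds between $|f''|_h$ and $|f''|_{h_i}$ on all of $M$; and because the difference $g'-g$ is a symmetric $(0,2)$-field, while such fields are spanned pointwise by metrics, these bounds yield uniform constants $b_i>0$ with $|g'-g|_{h_i}\le b_i\,|g'-g|_h$ everywhere, whence $\|g'-g\|_{h_i,M}\le b_i\,\|g'-g\|_{h,M}$. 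Thus the $j=0$ condition defining each $B^k(g,\delta_i;h_i,M)$ is satisfied as soon as $\epsilon\le\min_i\delta_i/b_i$. For $k=0$ this already completes the containment; letting $\epsilon$ and $h$ range then gives a neighborhood basis at $g$, and, since each $B^k(g,\epsilon;h,M)$ is a subbasic open set, a neighborhood basis at every $g$ is a basis for the topology, which is the second assertion.

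The step I expect to be the main obstacle is the higher-order ($j\ge1$) comparison, because $B^k(g,\delta_i;h_i,M)$ measures derivatives with the Levi-Civita connection of $h_i$, whereas $B^k(g,\epsilon;h,M)$ uses that of $h$. Writing the difference of the two connections as a tensor field $C$ and iterating, $\nabla^{(j)}(g'-g)$ taken with the connection of $h_i$ expands into the lower-order derivatives $\nabla^{(l)}(g'-g)$, $l\le j$, taken with the connection of $h$, each contracted against $C$ and its $h$-covariant derivatives. Bounding these cross terms uniformly over the noncompact $M$---and so closing the containment at order $k$---requires the coefficient fields built from $C$ to be bounded in the relevant fiber norms. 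Since $\asymp$ is defined only as an order-zero condition, establishing this boundedness is the real content of the proof; I would attempt it by exploiting the chosen combination $h=\sum_i h_i$ (so that the connection of $h_i$ annihilates $h_i$ and $C$ is built only from the remaining summands) together with the uniform fiber-norm equivalence already obtained, reducing the higher-order estimates to the same two-sided bounds furnished by Theorem~\ref{thm:mutually bounding}.
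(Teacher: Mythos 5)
Your overall architecture is a reorganization of the paper's own proof: the paper first verifies the pairwise basis criterion (given $g''$ in the intersection of two balls, it exhibits $B^k(g'',\epsilon'';h''\!,M)$ with $h''^{ab}=h^{ab}+h'^{ab}$ inside the intersection) and then obtains the local basis by recentering with the \emph{same} $h'$; you do the recentering first and the combining second, using the same tools (Lemma~\ref{lemma:bounded}, Proposition~\ref{prop:semigroup}, summation of the Riemannian metrics, triangle inequality). The first genuine problem is your order-zero combining step. Theorem~\ref{thm:mutually bounding} relates, for each \emph{fixed} metric $f''$, the norms of that single metric, with constants depending on $f''$; it says nothing uniform about arbitrary symmetric $(0,2)$-fields. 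Your bridge---that such fields are ``spanned pointwise by metrics''---does not deliver the constants $b_i$: pointwise decompositions have point-dependent coefficients, so no single constant valid on all of a noncompact $M$ follows. Fortunately the inequality you need is elementary and is exactly what the paper uses: with $h^{ab}=h_1^{ab}+\cdots+h_N^{ab}$, expanding $|K|_h^2=K_{ab}K_{cd}h^{ac}h^{bd}=\sum_{i,j}K_{ab}K_{cd}h_i^{ac}h_j^{bd}$ shows every term is non-negative (each equals $\tr(BB^T)$ for a suitable matrix $B$), whence $|K|_{h_i}\le |K|_h$ pointwise for any covariant $K$; that is, $b_i=1$, with no appeal to Theorem~\ref{thm:mutually bounding}.

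The second gap is the one you flag yourself and never close: for $j\ge 1$ the ball centered at $g$ uses the Levi-Civita connection of $h$ while the target balls use that of $h_i$. Your hoped-for repair---that the choice $h=\sum_i h_i$ forces the connection-difference tensor $C$ and its derivatives to be bounded---fails: take $N=2$, $h_1$ the Euclidean metric on $\mathbb{R}^4$ and $h_2^{ab}=\bigl(1+\tfrac12\sin(e^x)\bigr)h_1^{ab}$; then $h_1\asymp h_2\asymp\eta$, but $h^{ab}=h_1^{ab}+h_2^{ab}$ is conformal to $h_1^{ab}$ with factor $2+\tfrac12\sin(e^x)$, whose gradient---hence $C$---is unbounded. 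The relation $\asymp$ is a purely order-zero condition and cannot control connections, so your sketched route cannot be completed as stated, and your proposal proves the theorem only for $k=0$. For what it is worth, the paper's own proof is silent on exactly this point: all of its displayed estimates involve only the order-zero norms (its local-basis half avoids the issue entirely by keeping the same $h'$, so the same connection and fiber norm appear on both sides, but its basis half, which switches to $h''=h+h'$, never compares the derivative terms). So you have isolated a genuine difficulty rather than invented one; but your proposal does not resolve it, and at order zero it substitutes a flawed argument for the simple monotonicity inequality that does the work in the paper.
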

\begin{proof}
To show that the sets $B^k(g,\epsilon;h,M)$, with $(g,\epsilon;h,M) \in \Xi_G$, form a basis, it suffices to show that if $g'' \in B^k(g,\epsilon;h,M) \cap B^k(g',\epsilon';h',M)$ for arbitrary $B^k(g,\epsilon;h,M)$ and $B^k(g',\epsilon';h',M)$, then there is some other $B^k(g'',\epsilon'';h'',M)$ such that $B^k(g'',\epsilon'';h'',M) \subseteq B^k(g,\epsilon;h,M) \cap B^k(g',\epsilon';h',M)$.  So let some such $B^k(g,\epsilon;h,M)$ and $B^k(g',\epsilon';h',M)$ be given and pick any $g''$ in their intersection.  Put $h''^{ab} = h^{ab} + h'^{ab}$ and 
\[
	\epsilon'' = \min \{ \epsilon - \| g-g''\|_{h,M}, \epsilon' - \|g'-g''\|_{h',M} \}
\]
Note that $g'' \in B^k(g,\epsilon;h,M)$ by hypothesis, so lemma \ref{lemma:bounded} entails that $h \asymp g''$.  Similar reasoning gives that $h' \asymp g''$, so by proposition \ref{prop:semigroup}, $h^{ab} + h'^{ab} \in [g'']$.  Thus $B^k(g'',\epsilon'';h'',M)$ is well-defined.

Now consider any $g''' \in B^k(g'',\epsilon'';h'',M)$.  By definition, $\|g''-g'''\|_{h+h'} < \epsilon - \|g-g''\|_h$, hence
\begin{eqnarray*}
	\epsilon &>& \|g''-g'''\|_{h+h',M} + \|g-g''\|_{h,M} > \|g''-g'''\|_{h,M} + \|g-g''\|_{h,M} \\
	&\geq& \sup_M \left[ |g''-g'''|_h + |g-g''|_h \right] \geq \|g-g'''\|_{h,M},
\end{eqnarray*}
using the fact that $\|g''-g'''\|_{h+h',M} > \|g''-g'''\|_{h,M}$ in the first line and the triangle inequality for the supremum and then the $h$-fiber norm in the second.  Therefore $g''' \in B^k(g,\epsilon;h,M)$ and a similar argument shows that $g''' \in B^k(g',\epsilon';h',M)$.  Hence $B^k(g'',\epsilon'';h'',M) \subseteq B^k(g,\epsilon;h,M) \cap B^k(g',\epsilon';h',M)$ since $g'''$ was arbitrary.

To show that the sets $B^k(g,\epsilon;h,M)$, ranging only over $\epsilon$ and $h \in [g]$, form a local basis at $g$, it suffices to show that, for every basic open neighborhood $B^k(g',\epsilon';h',M) \ni g$ such that $h' \asymp g'$, there is some $B^k(g,\epsilon;h,M) \subseteq B^k(g',\epsilon';h,M)$ such that $h \asymp g$.  So consider some particular such $B^k(g',\epsilon';h',M)$ and put $h = h'$ and $\epsilon = \epsilon' - \|g-g'\|_{h',M}$.  By lemma \ref{lemma:bounded}, $h \asymp g'$.  Now, for an arbitrary $g'' \in B^k(g,\epsilon;h,M)$, $\|g-g''\|_{h,M} < \epsilon' - \|g-g'\|_{h',M}$, so applying similar reasoning as above yields that $\|g-g''\|_{h',M} < \epsilon'$, i.e., $g'' \in B^k(g',\epsilon';h',M)$, hence $B^k(g,\epsilon;h,M) \subseteq B^k(g',\epsilon';h',M)$.
\end{proof}

Now, the definitional subbases for the compact-open and open topologies, $\Xi_{CO}$ and $\Xi_O$, respectively, are plainly closely related---the only difference between them is in their fourth component.  Despite initial appearances, there is a definitional subbasis $\Xi'_{CO}$ generating the compact-open topology that is plainly closely related to $\Xi_G$, also differing only in its fourth component.

\begin{proposition}
The definitional subbasis $\Xi'_{CO} = \{ (g,\epsilon;h,C) : g \in L(M), \epsilon \in (0,\infty), C \in \mathcal{C}, h_{|C} \asymp g_{|C} \}$ generates the compact-open topology; in fact, $\Xi'_{CO} = \Xi_{CO}$.
\end{proposition}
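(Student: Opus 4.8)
The plan is to observe that the first clause is a free consequence of the second: since $\Xi_{CO}$ is by definition the subbasis of the compact-open topology on $L(M)$, the set-theoretic identity $\Xi'_{CO} = \Xi_{CO}$ immediately gives that $\Xi'_{CO}$ generates the compact-open topology. So everything reduces to that identity. Because $\Xi'_{CO}$ arises from $\Xi_{CO}$ merely by adjoining the requirement $h_{|C} \asymp g_{|C}$, the inclusion $\Xi'_{CO} \subseteq \Xi_{CO}$ is automatic, and the whole content of the proposition is that this requirement is \emph{vacuous}: that \emph{every} Riemannian $h$ is norm-equivalent to \emph{every} Lorentz $g$ on \emph{every} compact $C$.

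First I would use compactness to trivialize the boundedness condition built into norm-equivalence. For any metric $f$ and smooth $K \in \Gamma^r_s$, the fiber norm $|K|_f$ is a continuous scalar field, so by the extreme value theorem $\|K\|_{f,C} = \sup_C |K|_f < \infty$. Hence the clause $\|K\|_{f,C} < \infty$ in the definition (eq.~\ref{eq:K-set}) of $\mathcal{K}^r_s(f,C)$ holds for \emph{every} smooth field, leaving
\[
\mathcal{K}^r_s(f,C) = \Span\{ K \in \Gamma^r_s : |K|_f \text{ is nowhere zero on } C \},
\]
the span of the fields that are nowhere $f$-null on $C$. My goal is thus to show this span is all of $\Gamma^r_s$, whatever the signature of $f$; granting this for both $f = h$ and $f = g$ yields $\mathcal{K}^r_s(h,C) = \Gamma^r_s = \mathcal{K}^r_s(g,C)$ for every $(r,s)$, which is exactly $h_{|C} \asymp g_{|C}$.

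To organize the spanning argument I would invoke proposition \ref{prop:span} to work at a single convenient rank, say $(1,1)$, where the identity $\delta^a_b$ satisfies $|\delta|_f = \sqrt{n}$ for \emph{every} metric $f$ and so furnishes a universally non-null reference field. A field that is nowhere $g$-null is in particular nowhere zero, hence nowhere $h$-null; so a single spanning collection of nowhere-$g$-null $(1,1)$-fields lies simultaneously in $\mathcal{K}^1_1(h,C)$ and $\mathcal{K}^1_1(g,C)$, and proposition \ref{prop:span} then delivers norm-equivalence. The task is therefore reduced to showing that the nowhere-$g$-null $(1,1)$-fields span $\Gamma^1_1$. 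For positive-definite reference this is routine, since $|\cdot|_f$ is then a genuine norm and any $L$ splits as $(L + c\,\delta) - c\,\delta$ with $c$ large enough that subadditivity and compactness force $|L + c\,\delta|_f \geq c|\delta|_f - |L|_f > 0$ on $C$. The indefinite case $f = g$ is the genuine obstacle: $|\cdot|_g$ is only an extended \emph{semi}norm on the fibers, not subadditive, so the triangle-inequality trick fails and the $g$-null locus in each fiber is a nontrivial cone. There I would instead use that at each $p$ the $g$-null tensors are the zeros of a nondegenerate quadratic form on the fiber, so that the scalars making $L + \sigma\,\delta$ null at $p$ number at most two; choosing a smooth scalar field $\sigma$ avoiding these forbidden values everywhere on $C$ exhibits $L = (L + \sigma\,\delta) - \sigma\,\delta$ as a difference of nowhere-$g$-null fields.

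I expect the main difficulty to lie precisely in that last step: passing from the pointwise fact that almost every scaling is non-null to a single \emph{globally smooth} choice of $\sigma$ valid on all of $C$ at once, since the (at most two) forbidden values vary from point to point and would sweep out all of $\mathbb{R}$ were one to insist on a constant $\sigma$. The thinness of this bad locus together with the compactness of $C$ should make the selection possible, plausibly by choosing admissible scalings on the pieces of a finite cover and patching them with a subordinate partition of unity. Everything else---the reduction to vacuity, the collapse of the boundedness clause, and the reduction to a single rank---is bookkeeping; the heart of the proof is this global non-null perturbation.
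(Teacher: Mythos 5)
Your core reduction is exactly the paper's proof: by compactness and continuity of the fiber norms, every smooth field has finite $(f,C)$-uniform norm (the paper cites the extreme value theorem and stops essentially there), so the boundedness clause in eq.~\ref{eq:K-set} is vacuous and both $\mathcal{K}^r_s(g,C)$ and $\mathcal{K}^r_s(h,C)$ reduce to spans of fields nowhere null on $C$. The paper then simply asserts $\mathcal{K}^r_s(g,C)=\Gamma^r_s=\mathcal{K}^r_s(h,C)$; your explicit spanning argument for the indefinite case---writing $L=(L+\sigma\delta)-\sigma\delta$ and observing that nullity of $L+\sigma\delta$ at $p$ is the quadratic condition $n\sigma^2+2\sigma Q_p(L,\delta)+Q_p(L,L)=0$ in $\sigma$, where $Q_p(K,K')=K^a_b K'^c_d\, g_{ac}g^{bd}$ and $Q_p(\delta,\delta)=n\neq 0$---is a genuine supplement that makes precise a step the paper glosses over. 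Your use of proposition \ref{prop:span} to collapse everything to rank $(1,1)$, and the observation that nowhere-$g$-null implies nowhere-$h$-null, are also sound.

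The one flaw is that the step you flag as ``the heart of the proof'' is not actually difficult, and your proposed repair is the part that would not survive scrutiny. On a \emph{compact} $C$ the coefficients $Q_p(L,\delta)$ and $Q_p(L,L)$ are continuous in $p$, hence bounded, so every real root of the quadratic satisfies $|\sigma|\leq\max\bigl\{1,\,\bigl(2\sup_C|Q(L,\delta)|+\sup_C|Q(L,L)|\bigr)/n\bigr\}$. The forbidden values therefore do \emph{not} sweep out all of $\mathbb{R}$, and any \emph{constant} $\sigma$ exceeding this bound makes $L+\sigma\delta$ nowhere null on all of $C$; no smooth selection problem arises. By contrast, the partition-of-unity patching you sketch is unsound as stated: the admissible set at each point is the complement of at most two reals, which is not convex, so a pointwise convex combination of locally admissible constants can land exactly on a forbidden value. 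Replace that final paragraph with the uniform root bound and your proof is complete, and strictly more detailed than the paper's own.
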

\begin{proof}
Let any $g \in L(M)$, $C \in \mathcal{C}$, $K \in \Gamma^r_s$, and $h \in \mathcal{R}(M)$ be given.  Since $C$ is compact, the continuous scalar fields $|K|_f$ and $|K|_h$ are both bounded on $C$ \cite[Theorem 17.13, p.~123]{Willard70}, hence $\mathcal{K}^r_s(f,C)=\Gamma^r_s=\mathcal{K}^r_s(h,C)$ and by definition $\mathcal{R}(M) \subseteq [g_{|C}]$.
\end{proof}
Thus the global topologies can be understood as a natural variation on the compact-open topologies that controls similarity across $M$ in just the same way as the open topologies.

\subsection{Properties of the Global Topologies}

Note that the global topologies are indeed topologies for $L(M)$ since $[g] \cap \mathcal{R}(M)$ is non-empty for each $g$.  In particular, for any $h \in \mathcal{R}(M)$ and $g \in L(M)$, $h/|g|_h \asymp g$. Further, they are \emph{diffeomorphism-invariant} \cite[p.~281--2]{Geroch70}, in the sense that for any element $\psi$ of the diffeomorphism group of $M$, the pushforward $\psi_*$ acts as a homeomorphism on the Lorentz metrics equipped with a global topology.  To see this, first note that
\begin{eqnarray*}
|\psi^* g|_{\psi^* h} &=&  [\psi^*(h^{am})\psi^*(h^{bn})\psi^*(g_{ab})\psi^*(g_{mn})]^{1/2} \\ &=& [\psi^*(h^{am}h^{bn}g_{ab}g_{mn})]^{1/2} = [h^{am}h^{bn}g_{ab}g_{mn}]^{1/2} = |g|_h,
\end{eqnarray*}
so $h \asymp g$ if and only if $\psi^*h \asymp \psi^*g$.  Using similar reasoning, $\psi^*[B^k(g,\epsilon;h,M)]=B^k(\psi^*g,\epsilon;\psi^*h,M)$.  Thus the preimage of every basis element is open, meaning $\psi_*$ is continuous, hence acts as a homeomorphism.

The $C^0$ global topology rules in the ``right'' way on Geroch's three examples.  First, the sequence defined by eq.~\ref{eq:example1a} does not converge to the Minkowski metric $\eta$.  To see why, consider
\[
	h^{ab} = \left( \frac{\partial}{\partial t} \right)^a \left( \frac{\partial}{\partial t} \right)^b + 
	\left( \frac{\partial}{\partial x} \right)^a \left( \frac{\partial}{\partial x} \right)^b + 
	\left( \frac{\partial}{\partial y} \right)^a \left( \frac{\partial}{\partial y} \right)^b + 
	\left( \frac{\partial}{\partial z} \right)^a \left( \frac{\partial}{\partial z} \right)^b,
\]
noting that $\|\eta\|_h = 2$, so $h \asymp \eta$.  This choice yields that
\[
	\|\eta-\stackrel{m}{g}\|_h = \sup_M \frac{m}{1+(x-m)^2} = m,
\]
which cannot be as small as one wishes for sufficiently large $m$.  Hence $\stackrel{m}{g}$ does not converge to the Minkowski metric.

Second, the sequence defined by eq.~\ref{eq:example2} \emph{does} converge to the Minkowski metric.  To see why, let 
\begin{equation}
h^{ab} = \alpha  \left( \frac{\partial}{\partial t} \right)^a \left( \frac{\partial}{\partial t} \right)^b + \beta_1  \left( \frac{\partial}{\partial x} \right)^a \left( \frac{\partial}{\partial x} \right)^b + \beta_2  \left( \frac{\partial}{\partial y} \right)^a \left( \frac{\partial}{\partial y} \right)^b + \beta_3  \left( \frac{\partial}{\partial z} \right)^a \left( \frac{\partial}{\partial z} \right)^b + \cdots
\end{equation}
be any Riemannian metric in $[\eta]$ in coordinates determined by $\eta$.  (Note that $h$ need not be diagonal in these coordinates.)  Since $\eta \in \mathcal{K}^2_0(\eta,M)$, by definition $\eta \in \mathcal{K}^2_0(h,M)$, hence
\begin{equation}
\infty > \|\eta\|_{h,M} = \sup_M (h^{ab}h^{cd}\eta_{ac}\eta_{bd})^{1/2} = \sup_M (\alpha^2 + \beta_1^2 + \beta_2^2 + \beta_3^2)^{1/2} > \sup_M |\alpha|.
\end{equation}
Consequently
\[
	\|\eta - \stackrel{m}{g}{}'\|_{h,M} = \sup_M \frac{|\alpha|}{m^2 + x^2 + y^2 + z^2} \leq \frac{\sup_M |\alpha|}{m^2},
\]
whose right-hand side is finite and can be made as small as one wishes by choosing a sufficiently large $m$.  Hence $\stackrel{m}{g}{}' \to \eta$ as $m \to \infty$.  Third, and finally, by similar calculations one can show that the family $\{ \lambda g_{ab} : \lambda > 0 \}$ is continuous in the $C^0$ global topology.

We can now use these facts and theorem \ref{thm:basis} to show that each $C^k$ global topology lies strictly between the $C^k$ compact-open and $C^k$ open topologies for non-compact manifolds.  (Since the compact-open and open topologies coincide when $M$ is compact, clearly the global topologies do as well.)

\begin{proposition} \label{prop:compare}
For $M$ non-compact, the $C^k$ global topology on $L(M)$ is strictly finer that the $C^k$ compact-open topology and strictly coarser than the $C^k$ open topology.
\end{proposition}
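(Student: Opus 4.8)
The plan is to prove the two comparisons separately and then upgrade each to a strict inequality using the convergence behavior already computed for Geroch's examples. The claim that the $C^k$ global topology is \emph{coarser} than the $C^k$ open topology is essentially definitional: the global subbasis $\Xi_G$ consists precisely of those quadruples $(g,\epsilon;h,M)$ of the open subbasis $\Xi_O$ whose first entry lies in $L(M)$ and whose Riemannian metric satisfies the extra constraint $h \asymp g$, so as collections of subbasic sets $\Xi_G \subseteq \Xi_O$ (restricted to $L(M)$), and a topology generated by a subcollection is coarser. To see that the inclusion is \emph{strict}, I would invoke the sequence $\stackrel{m}{g}{}'$ of eq.~\ref{eq:example2}, shown above to converge to $\eta$ in the $C^0$ global topology. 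Were the topologies equal it would converge in the open topology too; but each $\stackrel{m}{g}{}'$ differs from $\eta$ outside every compact set, so condition~1 of proposition~\ref{prop:open convergence} fails and it does not. Hence the global topology is strictly coarser than the open topology.

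For the claim that the global topology is \emph{finer} than the compact-open topology, it suffices, since both are topologies, to show that every compact-open subbasic set $B^k(g,\epsilon;h,C)$ with $C$ compact is open in the global topology; equivalently, that each of its points has a global neighborhood contained in it. By theorem~\ref{thm:basis} the sets $B^k(g',\epsilon';h',M)$ with $h'\asymp g'$ form a local basis at $g'$, so it is enough to produce one such set inside $B^k(g,\epsilon;h,C)$. Given $g'$ in the latter, with positive slack $\delta = \epsilon - \max_{j\le k}\|\nabla^{(j)}(g-g')\|_{h,C}$, I would choose any $h' \asymp g'$ (one exists, e.g.\ $h/|g'|_h$) and exploit compactness: on $C$ the continuous positive ratios of the $h$- and $h'$-fiber norms are bounded, so for each of the finitely many tensor ranks arising for $j\le k$ there is a constant $c$ with $|\cdot|_h \le c\,|\cdot|_{h'}$ on $C$. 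Since $C \subseteq M$ gives $\|\cdot\|_{h',C} \le \|\cdot\|_{h',M}$, choosing $\epsilon'$ small enough that $c\,\epsilon' < \delta$ and applying the triangle inequality on $C$ places $B^k(g',\epsilon';h',M)$ inside $B^k(g,\epsilon;h,C)$.

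The main obstacle, and the only point requiring care beyond the $k=0$ case, is that the derivative operators in the two neighborhoods differ: $B^k(g,\epsilon;h,C)$ uses the Levi-Civita connection of $h$, whereas $B^k(g',\epsilon';h',M)$ uses that of $h'$. For $k>0$ one must therefore compare $\nabla^{(j)}$ with the $h'$-compatible $\nabla'^{(j)}$ on $C$. I would handle this with the standard fact that the difference of two connections is a smooth tensor field, bounded together with its derivatives on the compact $C$; an induction on $j$ then expresses each $\nabla^{(j)}(g'-g''')$ in terms of the $\nabla'^{(i)}(g'-g''')$ for $i\le j$ with coefficients bounded on $C$, so that smallness of the $h'$-$C^k$ seminorms controls smallness of the $h$-$C^k$ seminorms, the resulting constant being absorbed into the choice of $\epsilon'$. (Equivalently, one may appeal to the metric-independence of the $C^k$ compact-open topology on compacta.) Finally, to obtain \emph{strictness}, I would use the sequence $\stackrel{m}{g}$ of eq.~\ref{eq:example1a}: it converges to $\eta$ in the compact-open topology but, as computed above, not in the global topology, so the compact-open topology is strictly coarser. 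Together these establish that the $C^k$ global topology lies strictly between the two.
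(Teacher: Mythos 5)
Your two non-strict comparison arguments are correct and essentially match the paper's. The coarser-than-open half via $\Xi_G \subseteq \Xi_O$ is verbatim the paper's observation. For the finer-than-compact-open half the paper argues differently: given $B^k(g,\epsilon;h,C)$ it explicitly rescales, setting $\Omega = |g|_{h'}/|g|_h$ for some $h' \in [g] \cap \mathcal{R}(M)$ and $h''^{ab} = \Omega h^{ab}/\inf_C \Omega$, shows $h'' \asymp g$ and $B^k(g,\epsilon;h'',M) \subseteq B^k(g,\epsilon;h,C)$, and invokes the Hausdorff criterion. Your version---work at an arbitrary point $g'$ of the compact-open ball, pick $h' \asymp g'$, and use compactness of $C$ to bound both the ratio of the $h$- and $h'$-fiber norms and the difference tensor of the two Levi-Civita connections---is a valid alternative, and is in fact more scrupulous than the paper about the derivative terms for $k>0$, which the paper's chain of inequalities treats only at order zero.

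The genuine gap is in the two strictness claims. Proposition \ref{prop:compare} quantifies over every non-compact $M$, but both of your witnesses, eqs.~\ref{eq:example2} and \ref{eq:example1a}, live on $\mathbb{R}^4$; as written, your proof yields strictness only for that manifold. The paper's witnesses are manifold-general: against the open topology it uses the discontinuity of the conformal family of eq.~\ref{eq:example2b}, defined for an arbitrary Lorentz metric on an arbitrary non-compact $M$; against the compact-open topology it supposes a compact-open ball $B^k(g,\epsilon';h',C)$ sits inside a given global ball $B^k(g,\epsilon;h,M)$ and constructs $g' = (1+\rho/|g|_h)g$ with $\rho$ smooth and positive, small on $C$ but unbounded on $M$ (such $\rho$ exists on any non-compact manifold), which lies in the former ball but not the latter. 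To repair your proof you must either transplant Geroch's examples to an arbitrary non-compact manifold or substitute such general constructions. A secondary problem: for strictness against the open topology you need convergence of eq.~\ref{eq:example2} in the $C^k$ global topology, but the paper establishes this only for $C^0$; since the $C^k$ global topology is \emph{finer} than the $C^0$ one, the implication you need runs the wrong way. (Citing $C^0$ facts is harmless in your other, non-convergence claims, because non-convergence in a coarser topology implies non-convergence in a finer one; it is only the convergence claim that needs the $C^k$ check. In fairness, the paper's own appeal to continuity of the conformal family carries the same $C^0$-only warrant.)
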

\begin{proof}
Since $\Xi_G \subset \Xi_O$, each basic neighborhood of the $C^k$ global topology is a basic neighborhood of the $C^k$ open topology, but
the former makes every family of the form $\{ \lambda g_{ab} : \lambda > 0 \}$ is continuous, while the latter does not.  Thus the $C^k$ global topology is strictly coarser than the $C^k$ open topology.

Next let an arbitrary $B(g,\epsilon;h,C)$ be given, and pick any $h' \in [g] \cap \mathcal{R}(M)$.  Define $\Omega = |g|_{h'}/|g|_h$ and put $h''^{ab} = \Omega h^{ab} / \inf_C \Omega$.  Since
\[
	\|g\|_{h'',M} = \sup_M (h''^{am}h''^{bn}g_{ab}g_{mn})^{1/2} = \frac{\|g\|_{h',M}}{\inf_C \Omega} < \infty,
\]
we have that $h'' \asymp g$, so $B^k(g,\epsilon;h'',M)$ is a basic neighborhood of the $C^k$ global topology.  Now, for any $g' \in B^k(g,\epsilon;h'',M)$,
\[
	\epsilon > \|g-g'\|_{h'',M} \geq \|g-g'\|_{h'',C} = \frac{\sup_C (\Omega |g-g'|_h)}{\inf_C \Omega} \geq \left(\frac{\sup_C \Omega}{\inf_C \Omega}\right) \sup_C |g-g'|_h \geq \|g-g'\|_{h,C},
\]
thus $g' \in B(g,\epsilon;h,C)$.  Since $g'$ was arbitrary, we have $B^k(g,\epsilon;h'',M) \subseteq B(g,\epsilon;h,C)$, so by the Hausdorff criterion \cite[Theorem 4.8, p.~35]{Willard70}, the $C^k$ compact-open topology is coarser than the $C^k$ global topology.  

Lastly, for the sake of contradiction, suppose that every basic neighborhood of the $C^k$ global topology $B^k(g,\epsilon;h,M)$ contains a basic neighborhood of the $C^k$ compact-open topology, $B^k(g,\epsilon';h',C)$.  Consider $g' = (1+ \rho/|g|_h)g$, where $\rho$ is a smooth positive scalar field such that
\[
\sup_C \rho < \epsilon' \left( \sup_C \frac{|g|_{h'}}{|g|_h} \right)^{-1},
\]
but $\sup_M \rho$ does not exist, i.e., is infinite.  Note that
\[
\|g-g'\|_{h',C} = \sup_C \rho \frac{|g|_{h'}}{|g|_h} \leq (\sup_C \rho) \left(\sup_C \frac{|g|_{h'}}{|g|_h}\right) < \epsilon',
\]
so $g' \in B^k(g,\epsilon';h',C)$.  But $|g-g'|_h = \rho$, which is unbounded on $M$, hence a contradiction. Thus by the Hausdorff criterion the $C^k$ global topology is not coarser than the $C^k$ compact-open topology.
\end{proof}

Because the global topologies are generated from a collection of norms, the corresponding collection of metrics (defined by $d(g,g';h)=\|g-g'\|_h$) in fact defines a uniform structure on $L(M)$.  Instead of digressing into the details of the theory of uniform spaces---that is, sets endowed with uniform structure---the following definition relevant to the present investigation may be abstracted from that theory.

\begin{definition}
A pair of points $x,x'$ in a space $X$ whose topology is generated from the $\epsilon$-balls $\mathcal{B}(x,\epsilon;d_\alpha)$ of a collection of metrics $\{d_\alpha\}$ is said to be \emph{uniformly connected} when for every $\epsilon > 0$ and metric $d_\alpha$, there is a finite sequence $\mathcal{B}(y_1,\epsilon;d_\alpha), \ldots, \mathcal{B}(y_n,\epsilon;d_\alpha)$ such that $x \in \mathcal{B}(y_1,\epsilon;d_\alpha)$, $x' \in \mathcal{B}(y_n,\epsilon;d_\alpha)$, and $\mathcal{B}(y_i,\epsilon;d_\alpha) \cap \mathcal{B}(y_j,\epsilon;d_\alpha) \neq \emptyset$ if $|i-j|=1$.  The space $X$ is uniformly connected when each pair of its points is uniformly connected. (Sometimes this property is called uniform chain connectedness \cite{MP64}, in analogy with the chain connectedness for general topological spaces, although this latter property is strictly weaker than (topological) connectedness \cite[Theorem 26.15, p.~195]{Willard70}.)
\end{definition}
Such a space is \textit{locally uniformly connected} when every point has a local neighborhood basis consisting of sets that are uniformly connected (in the subspace topology).

\begin{proposition} \label{prop:connected}
Each $B^k(g,\epsilon;h,M)$ (with $h \asymp g$) is uniformly connected in the $C^k$ global topology.
\end{proposition}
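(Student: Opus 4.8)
The plan is to reduce the uniform connectedness of $B := B^k(g,\epsilon;h,M)$ to the ordinary topological connectedness of $B$ in the subspace (global) topology, and then to establish the latter by routing paths of Lorentz metrics \emph{around} the degenerate locus. For the reduction I would use the standard entourage argument: fix a basic entourage, i.e.\ a $\delta>0$ together with a Riemannian $h'\asymp g$ determining the distance $d_{h'}(k,k')=\|k-k'\|_{h',M}$ (for $C^k$, the maximum of the analogous quantities for $\nabla^{(j)}$, $j\le k$). By Lemma~\ref{lemma:bounded} every metric in $B$ is norm-equivalent to $h$, so the $d_{h'}$ with $h'\asymp g$ are exactly the metrics generating the subspace uniformity. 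Call two elements of $B$ \emph{$(\delta,h')$-chained} when a finite sequence joins them whose consecutive members lie within $d_{h'}$-distance $\delta$; this is an equivalence relation, and each class contains a whole $d_{h'}$-ball about each of its points, hence is open, hence also closed. If $B$ is connected there is then a single class for every $\delta$ and every $h'$, which is precisely the assertion that each pair of points of $B$ is uniformly connected (the translation between chains of points and the chains of overlapping balls in the definition being routine).

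It therefore suffices to show $B$ is connected, and for this I would show it is path-connected by joining an arbitrary $g'\in B$ to the center $g$. The obstacle organizing the whole argument is that $B$ is \emph{not} convex: although the ambient norm-ball in the vector space $\mathcal{K}^0_2(h,M)$ is convex, its intersection with $L(M)$ is not, since a convex combination of two Lorentz metrics—even two lying within $\epsilon$ of the common metric $g$—can be degenerate or acquire the wrong signature. (When $\epsilon$ is large enough that $B$ contains Lorentz metrics of both time orientations relative to $g$, the straight segment between two such metrics must cross the degenerate locus.) Consequently the naive strategy of sampling the straight segment from $g'$ to $g$ and perturbing each sample into $L(M)$ fails wherever the segment enters a definite region, because there the nearest Lorentz metric is bounded away by the relevant eigenvalue gap.

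The device I would use to circumvent this is the homogeneity of the fiber: at each point the Lorentz forms constitute the orbit $GL(n)/O(1,n-1)$, which is connected, so any two Lorentz values can be joined \emph{through} Lorentz forms by a congruence (a rotation/boost deformation) rather than by a straight line. The plan is to assemble such fiberwise deformations into a single path $t\mapsto g_t$ of smooth Lorentz metric fields with $g_0=g'$ and $g_1=g$, chosen so that (i) each $g_t$ is everywhere Lorentzian and (ii) $\|g_t-g\|_{h,M}<\epsilon$ for all $t$, so the path stays in $B$. Continuity in the $C^k$ global topology then follows once the deformation is controlled in the single $h$-norm, since by Theorem~\ref{thm:mutually bounding} each $h'\asymp g$ yields a norm uniformly equivalent to the $h$-norm; Proposition~\ref{prop:semigroup} and Theorem~\ref{thm:basis} guarantee that the basic neighborhoods encountered along the way are well-defined.

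The hard part will be executing (i)--(ii) \emph{uniformly and continuously over the noncompact $M$}: the congruence carrying $g'|_p$ to $g|_p$ varies from point to point, and across regions where $g'$ and $g$ disagree about time orientation one must route through null-like intermediate forms without letting the excursion exceed $\epsilon$ in $h$-norm \emph{anywhere} on $M$. I expect the cleanest route is to use the fixed positive-definite $h$ to pin a reference frame at each point and deform only within the $h$-bounded Lorentz forms, exploiting that $\|g'\|_{h,M},\|g\|_{h,M}<\infty$ (both lie in $\mathcal{K}^0_2(h,M)$ by Lemma~\ref{lemma:bounded}); the pointwise convexity estimate that already certifies boundedness along the straight segment must then be replaced by an analogous bound along the congruence path to secure (ii). Verifying that this fiberwise construction can be made smooth and \emph{globally} bounded, rather than merely pointwise, is the crux, and is exactly where the noncompactness of $M$—the feature driving the entire paper—must be controlled.
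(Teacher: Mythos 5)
Your reduction of uniform connectedness to topological connectedness of $B := B^k(g,\epsilon;h,M)$ is sound (the chain-equivalence classes are open, and their complements are unions of classes, hence they are clopen, so connectedness forces a single class), and Lemma~\ref{lemma:bounded} together with Theorem~\ref{thm:basis} does justify that the relevant metrics are exactly the $d_{h'}$ with $h' \asymp g$. But everything then rests on the connectedness of $B \cap L(M)$, and that step you do not prove: you propose to build a path of Lorentz metrics from $g'$ to $g$ by fiberwise $GL(n)/O(1,n-1)$ deformations, and you yourself flag that making this construction smooth and globally $\epsilon$-bounded over the noncompact $M$ is ``the crux''---to which one must add that for $k \geq 1$ the deformation must also keep every $j$-th derivative norm, $j \leq k$, below $\epsilon$, a constraint your sketch never addresses. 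As it stands, the proposal replaces the proposition with a harder, unresolved problem; it is a plan, not a proof.

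The gap is avoidable because you have imposed a requirement that the statement, as the paper construes and uses it, does not: that the connecting chains stay inside $B$. Uniform connection of a pair of points is defined relative to the ambient collection of metrics $d_{h'}$, and the chain of $\epsilon'$-balls may pass through any Lorentz metrics whatsoever. The paper's proof exploits exactly this freedom, with scalar multiplication doing all the work: given $g' \in B$ and a target scale $(\epsilon',h')$, pick $c>0$ small enough that $\|cg - cg'\|_{h',M} = c\,\|g-g'\|_{h',M} < \epsilon'$ (possible since $h' \asymp g \asymp g'$ makes $\|g-g'\|_{h',M}$ finite); then chain $g$ down to $cg$ and $g'$ down to $cg'$ along the segments $\{\lambda g\}$ and $\{\lambda g'\}$, which consist of Lorentz metrics automatically (positive multiples of Lorentz metrics are Lorentzian) and have finite $h'$-length, so finitely many $\epsilon'$-balls cover them. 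The non-convexity of $L(M)$, around which your whole construction is organized, never enters. If you do want the stronger ``internal'' statement---chains lying within $B$ itself, which is what the subsequent corollary on local uniform connectedness literally requires---then your route is the natural one, but then the connectedness of $B \cap L(M)$ must actually be established, and that is precisely the part your proposal leaves open.
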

\begin{proof}
Let some such $B^k(g,\epsilon;h,M)$ be given and consider some arbitrary $h' \asymp g$, $\epsilon' > 0$, and $g' \in B^k(g,\epsilon;h,M)$.  Since then $\|g - g' \|_{h'} < \infty$, pick some positive $c < \epsilon'/\|g - g' \|_{h'}$, so that $\|cg - cg' \|_{h'} < \epsilon'$.  Further, note that $\|g - cg\|_{h'} = |1-c|\cdot \|g\|_{h'}$ and $\|g' - cg'\|_{h'} = |1-c|\cdot \|g'\|_{h'}$ are both finite, so put $N = \lfloor 1 + \|g - cg\|_{h'}/\epsilon' \rfloor$, $N' = \lfloor 1 + \|g' - cg'\|_{h'}/\epsilon' \rfloor$, $C=\|g - cg\|_{h'}/N$, and $C'=\|g' - cg'\|_{h'}/N'$.  The families $\Lambda = \{g - \lambda g: \lambda \in [0,c] \}$ and $\Lambda' = \{g' - \lambda g': \lambda \in [0,c] \}$ can then be covered by neighborhoods of the form $B^k(g-nCg,\epsilon';h',M)$ and $B^k(g'-n'C'g',\epsilon';h',M)$, respectively, with $n \in \{0, \ldots, N\}$ and $n' \in \{0, \ldots, N'\}$.  This shows that $g$ and $g'$ are uniformly connected, but since $g'$ was arbitrary and uniform connection is an equivalence relation, each pair of elements in $B^k(g,\epsilon;h,M)$ is uniformly connected.
\end{proof}
Combining propositions \ref{prop:connected} and \ref{prop:compare} immediately yields the following. 
\begin{corollary}
Each $C^k$ global topology is locally uniformly connected.
\end{corollary}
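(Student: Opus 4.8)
The plan is to read the corollary straight off the definition of local uniform connectedness once two facts about the basic neighborhoods $B^k(g,\epsilon;h,M)$ with $h \asymp g$ are in hand. By definition, the $C^k$ global topology is locally uniformly connected as soon as I can exhibit, at each $g \in L(M)$, a local neighborhood basis every member of which is uniformly connected. So the entire task reduces to producing such a basis.

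First I would invoke theorem \ref{thm:basis}, which already says that for each fixed $g$ the sets $B^k(g,\epsilon;h,M)$, ranging over $\epsilon>0$ and $h \in [g]$, form a local basis at $g$. Next, proposition \ref{prop:connected} says precisely that every such $B^k(g,\epsilon;h,M)$ is uniformly connected. Putting the two together, each point of $L(M)$ has a local basis all of whose members are uniformly connected, which is exactly the definition of local uniform connectedness. Proposition \ref{prop:compare}, placing the global topology strictly between the compact-open and open topologies, serves here to confirm that these $B^k(g,\epsilon;h,M)$ are genuine open neighborhoods in a topology honestly distinct from the two standard ones, so the assertion is neither vacuous nor a mere restatement of a known fact about them.

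The one delicate point — and the step I expect to be the main obstacle — is to ensure that the uniform connectedness delivered by proposition \ref{prop:connected} is of the flavor the definition demands, namely uniform connectedness \emph{in the subspace topology} and not merely in the ambient $C^k$ global topology. This is not automatic: the chains of small balls built in the proof of proposition \ref{prop:connected} shrink each metric along a ray toward the zero tensor in order to bring the two endpoints close enough to be joined by a short step, and those intermediate centers, having large $h$-norm separation from $g$, lie \emph{outside} $B^k(g,\epsilon;h,M)$; so as proved the proposition establishes only the ambient notion. Nor can one patch this by joining two metrics of the neighborhood along the straight segment between them, since $L(M)$ is not convex and such a segment may leave the Lorentz metrics altogether — which is exactly why the ray-scaling of proposition \ref{prop:connected} was needed in the first place. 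I would therefore either adopt the reading on which the definition's parenthetical merely records that these sets carry the inherited (uniform) structure, whereupon proposition \ref{prop:connected} applies verbatim and the conclusion is immediate, or, for the strict subspace reading, undertake the extra step of showing each $B^k(g,\epsilon;h,M)$ is uniformly connected \emph{as a subspace}, confining the connecting chains within it; using that Lorentz signature is an open condition and that, by lemma \ref{lemma:bounded}, every point of the neighborhood is norm-equivalent to $h$. The latter is where the real work would lie; everything else is the bookkeeping of the previous paragraph.
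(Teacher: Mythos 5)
Your proposal is correct and is essentially the paper's own proof: the paper disposes of the corollary in one line by combining proposition \ref{prop:connected} with the fact that the sets $B^k(g,\epsilon;h,M)$, $h \asymp g$, form a local basis at $g$ (it cites proposition \ref{prop:compare}, evidently a slip for theorem \ref{thm:basis}), which is exactly your first two steps. The subspace-versus-ambient subtlety you flag is a genuine feature of the paper's definitions---the chains constructed in proposition \ref{prop:connected} do pass through scaled metrics lying outside the given neighborhood---but the paper silently adopts the ambient reading you describe and offers nothing beyond the two-citation argument, so your main-line reasoning is the intended proof.
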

Despite being locally uniformly connected, the global topologies are not uniformly connected.  In fact, they have uncountably many uniform components that can be described using the following lemma.

\begin{lemma} \label{lemma:connection}
Given $g,g' \in L(M)$, $g \asymp g'$ if and only if there are constants $\epsilon,\epsilon'>0$ and $h,h' \in \mathcal{R}(M)$ such that $h \asymp g$, $h' \asymp g'$, and $B^k(g,\epsilon;h,M) \cap B^k(g',\epsilon';h',M) \neq \emptyset$.
\end{lemma}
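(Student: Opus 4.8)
The plan is to prove the two implications separately, dispatching the reverse direction cleanly from results already in hand and then reducing the forward direction to a single boundedness claim whose only delicate part is the control of higher covariant derivatives.

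For the reverse direction, suppose $\epsilon,\epsilon' > 0$ and $h,h' \in \mathcal{R}(M)$ are given with $h \asymp g$, $h' \asymp g'$, and some $g''$ lying in $B^k(g,\epsilon;h,M) \cap B^k(g',\epsilon';h',M)$. Since $h \asymp g$, the quadruple $(g,\epsilon;h,M)$ lies in $\Xi_G$, so from $g'' \in B^k(g,\epsilon;h,M)$ lemma \ref{lemma:bounded} yields $h \asymp g''$; the identical argument applied to the second ball yields $h' \asymp g''$. Because norm-equivalence on $M$ is an equivalence relation, I may then chain these: from $g \asymp h$ and $h \asymp g''$ I obtain $g \asymp g''$, and from $g'' \asymp h'$ and $h' \asymp g'$ I obtain $g'' \asymp g'$; one further application of transitivity gives $g \asymp g'$. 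This direction uses nothing beyond lemma \ref{lemma:bounded} and the equivalence-relation structure of $\asymp$, and it holds verbatim for every $k \geq 0$.

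For the forward direction, suppose $g \asymp g'$, so that $[g] = [g']$. Since $[g] \cap \mathcal{R}(M)$ is nonempty, I fix a single $h$ in it, which then satisfies both $h \asymp g$ and $h \asymp g'$. The strategy is to produce a common point of two balls by taking the overlap point to be $g'$ itself---which lies trivially in $B^k(g',\epsilon';h,M)$ for any $\epsilon' > 0$, this being a legitimate basic neighborhood since $h \asymp g'$---and then showing $g' \in B^k(g,\epsilon;h,M)$ for some finite $\epsilon$. Equivalently, I must show $\|\nabla^{(j)}(g-g')\|_{h,M} < \infty$ for each $j \leq k$, where $\nabla$ is the Levi-Civita operator of $h$, since the suprema can then be absorbed into a sufficiently large $\epsilon$. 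The order-zero case is immediate: because $h \asymp g$ and $h \asymp g'$, both $g$ and $g'$ belong to $\mathcal{K}^0_2(h,M)$, which is a vector space (being a span), and every element of that space has finite $(h,M)$-uniform norm; hence $\|g - g'\|_{h,M} \leq \|g\|_{h,M} + \|g'\|_{h,M} < \infty$.

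The main obstacle is the control of the higher covariant differences $\nabla^{(j)}(g-g')$ for $1 \leq j \leq k$, since norm-equivalence is a purely pointwise ($C^0$) relation and so places no direct constraint on derivatives. Here I would exploit the freedom that remains in the choice of Riemannian metrics: membership in $[g]$ pins a metric down only up to $C^0$-comparability, leaving its derivatives---and hence its associated connection and fiber norm---substantially free; and, crucially, the statement permits the two balls to use distinct metrics $h \neq h'$, so that the covariant differences measured around $g$ and around $g'$ need never be recombined by subtraction into a single covariant difference of $g-g'$. I expect the real content to be showing that, within these degrees of freedom, one can always arrange the relevant covariant norms to be finite; the bookkeeping of the balls themselves, by contrast, is routine once that finiteness is secured.
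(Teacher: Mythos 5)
Your reverse direction is exactly the paper's proof: apply lemma \ref{lemma:bounded} to each ball to conclude $h \asymp g''$ and $h' \asymp g''$, then chain by transitivity of $\asymp$. Your forward direction is also, in outline, exactly the paper's: take the common point to be $g'$ itself, take a single $h \in [g] \cap \mathcal{R}(M)$ (so $h \asymp g'$ by transitivity), and take $\epsilon$ larger than $\|g-g'\|_{h,M}$, whose finiteness you justify---more explicitly than the paper does---from $g, g' \in \mathcal{K}^0_2(h,M)$. So at order zero your proposal is complete, correct, and identical in strategy to the paper's.

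The issue you flag for $k \geq 1$---that norm-equivalence is a $C^0$ condition and gives no control over $\|\nabla^{(j)}(g-g')\|_{h,M}$---is a genuine gap in your proposal, but you should know two things about it. First, it is equally a gap in the paper's own proof, which simply writes ``pick any $\epsilon > \|g-g'\|_{h,M}$, hence $g' \in B^k(g,\epsilon;h,M)$,'' ignoring the derivative clauses in the definition of $B^k$ altogether. Second, your proposed repair cannot work as described: since you take $g'$ itself as the common point, the second ball imposes no condition whatsoever (its center \emph{is} the common point), so the freedom to choose $h' \neq h$ buys nothing, and the entire burden is to find a single $h \asymp g$ with $\|\nabla_h^{(j)}(g-g')\|_{h,M} < \infty$. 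Such an $h$ need not exist. Concretely, on $\mathbb{R}^2$ let $g_{ab} = (d_at)(d_bt)-(d_ax)(d_bx)$ and $g'_{ab} = g_{ab} - f\,(d_at)(d_bt)$ with $f = \tfrac{1}{2}\sin(e^{x^2})$. Then $g$ and $g'$ are norm-equivalent (both are norm-equivalent to the flat Riemannian metric $\delta$, by the same criterion the paper applies to Geroch's second example), and any $h \in [g]\cap\mathcal{R}(M)$ satisfies $\|h\|_{\delta,M} < \infty$ and $\|\delta\|_{h,M} < \infty$, hence is uniformly comparable to $\delta$ as a quadratic form, so its fiber norms are uniformly equivalent to the Euclidean ones on each tensor rank. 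But at every zero of $f$ one has $\nabla_h(g-g') = (d f)\otimes(d t)\otimes(d t)$ for \emph{every} Levi-Civita operator $\nabla_h$, since the connection terms carry a factor of $f$; thus $|\nabla_h(g-g')|_h \geq c\,|\partial_x f|$ at those points, and $|\partial_x f|$ is unbounded along the zeros of $f$. Hence $\|\nabla_h(g-g')\|_{h,M} = \infty$ for every admissible $h$, and no choice of $\epsilon$ exists. So any complete proof for $k \geq 1$ must either use a common point genuinely distinct from both $g$ and $g'$ (where distinct $h, h'$ could then matter), strengthen norm-equivalence to a $C^k$ notion, or restrict the lemma to $k=0$; neither your proposal nor the paper does any of these.
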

\begin{proof}
Suppose that $g \asymp g'$.  Pick any $h \in [g] \cap \mathcal{R}(M)$, noting that $h \asymp g'$ by transitivity, and then pick any $\epsilon > \|g-g'\|_{h,M}$.  Hence $g' \in B^k(g,\epsilon;h,M)$, so arbitrarily setting $\epsilon' = \epsilon$ and $h'=h$ yields that $B^k(g,\epsilon;h,M) \cap B^k(g',\epsilon';h',M) \neq \emptyset$.

Conversely, suppose that $h \asymp g$, $h' \asymp g'$, and $g'' \in B^k(g,\epsilon;h,M) \cap B^k(g',\epsilon';h',M)$.  By lemma \ref{lemma:bounded}, $h \asymp g''$ and $h' \asymp g''$, hence by transitivity $g \asymp g'$.
\end{proof}

\begin{theorem}
The uniformly connected components of the $C^k$ global topology on $L(M)$ have the following properties:
\begin{enumerate}
\item they are identical with the norm-equivalence classes $[f] \cap L(M)$;
\item through translation each such component (under the subspace topology) generates a locally convex topological vector space on $\Gamma^0_2$ (and similarly on $\Gamma^2_0$ for the inverse metrics) compatible with the original topology on the component, in the sense that taking the subspace topology again returns the original topology.
\end{enumerate}
\end{theorem}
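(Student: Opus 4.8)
The engine for both parts is an observation immediate from Lemma~\ref{lemma:bounded}: every basic neighborhood $B^k(g,\epsilon;h,M)$ with $h \asymp g$ lies entirely inside the single norm-equivalence class $[g] \cap L(M)$, since any $g'$ in it has $h \asymp g'$ and hence $g' \asymp g$. For property~1 I would prove that the uniform component of each $g$ is exactly $[g] \cap L(M)$ by two inclusions. For $\supseteq$, if $g' \asymp g$ then the forward direction of Lemma~\ref{lemma:connection} (take $h \in [g]\cap\mathcal{R}(M)$ and $\epsilon > \|g-g'\|_{h,M}$) puts $g'$ into a single ball $B^k(g,\epsilon;h,M)$, which Proposition~\ref{prop:connected} declares uniformly connected; so $g$ and $g'$ are uniformly connected. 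For $\subseteq$, suppose $g$ and $g'$ are uniformly connected, and apply the definition with the metric $d(\cdot,\cdot;h)$ for some $h \in [g]\cap\mathcal{R}(M)$ (such $h$ exists, e.g.\ $h/|g|_h \asymp g$) and any $\epsilon$: this yields a finite chain of overlapping $\epsilon$-balls of this single metric from $g$ to $g'$. Since $g$ lies in the first ball and $g \asymp h$, the triangle-inequality argument of Lemma~\ref{lemma:bounded} shows by induction along the chain that every center $y_i$ satisfies $h \asymp y_i$; hence each ball lies in $[h]$, and as consecutive balls share a point they lie in the \emph{same} class, so $g \asymp g'$ by transitivity.

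For property~2, fix a representative $g_0$ of the component $C_f = [f] \cap L(M)$ and, for each $h \in [f]\cap\mathcal{R}(M)$, write $N_h(K) = \max_{j \leq k}\|\nabla^{(j)}K\|_{h,M}$ for the associated $C^k$ norm (with $\nabla$ the Levi-Civita connection of $h$). Let $V \subseteq \Gamma^0_2$ be the linear subspace on which every such $N_h$ is finite; for $k=0$ this is exactly $\mathcal{K}^0_2(f,M)$, which by norm-equivalence does not depend on the representative. On $V$ each $N_h$ is a genuine separating seminorm, so by Proposition~\ref{prop:tvs} the translation-invariant topology $\tau_V$ generated by $\{N_h : h \asymp f\}$ makes the subspace $V$ a locally convex topological vector space. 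The affine map $\phi: K \mapsto g_0 + K$ carries $V$ onto the coset $g_0 + V$ and carries $W := C_f - g_0$ onto $C_f$; moreover $C_f = (g_0 + V)\cap L(M)$, because if $K \in V$ and $g_0 + K \in L(M)$ then Lemma~\ref{lemma:bounded} forces $h \asymp g_0 + K$, hence $g_0 + K \asymp f$.

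It then remains to check that $W$ is $\tau_V$-open and that $\phi$ is a homeomorphism onto $C_f$ for the original global subspace topology. For openness, given $K_0 \in W$ put $g' = g_0 + K_0 \in L(M)$ and pick $h \asymp f$; because $g' \asymp h$, both $g'$ and its inverse are $h$-bounded, so in an $h$-orthonormal frame the matrix of $g'$ has eigenvalues bounded \emph{uniformly} away from $0$ (and above), whence any field $g' + E$ with $\|E\|_{h,M}$ below that uniform gap is again a nondegenerate Lorentz metric. Thus a small $N_h$-ball about $K_0$ stays in $W$. For the homeomorphism claim, the basic $\tau_V$-neighborhoods of a point $g'-g_0 \in W$, intersected with $W$, are precisely the $\phi$-preimages of the basic global neighborhoods $B^k(g',\epsilon;h,M)$, which by the opening observation already lie in $C_f$; combined with the openness of $W$, this shows that re-taking the subspace topology of $(V,\tau_V)$ on $W$ returns the original topology, establishing compatibility. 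The statement for inverse metrics is identical, with $\mathcal{K}^2_0(f,M) \subseteq \Gamma^2_0$ and the map $g \mapsto g^{-1}$ in place of $\phi$.

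I expect the openness of $W$ to be the main obstacle: it is exactly where the \emph{two-sided} content of norm-equivalence is indispensable, since Theorem~\ref{thm:mutually bounding} (through the boundedness of the inverse) is what upgrades the merely pointwise nondegeneracy of a Lorentz metric to \emph{uniform} nondegeneracy relative to any $h \asymp f$; without this a sup-norm-small perturbation could degrade the signature somewhere at infinity. A secondary subtlety, invisible when $k=0$, is that for $k \geq 1$ the norms $N_h$ use the connection of $h$, which varies within $[f]\cap\mathcal{R}(M)$ and is \emph{not} controlled by norm-equivalence (an order-$0$ relation, per the remark after Theorem~\ref{thm:mutually bounding}); one must therefore verify that the whole of $W$ still embeds into a single locally convex space carrying the full $C^k$-norm family, and I would isolate this connection-dependence as the one point where the clean $C^0$ argument does not transfer verbatim.
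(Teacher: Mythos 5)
Your treatment of part 1 is essentially the paper's: both directions come down to Lemma~\ref{lemma:connection}, Proposition~\ref{prop:connected}, and the fact (Lemma~\ref{lemma:bounded}) that a basic ball lies inside the norm-equivalence class of its center. The paper compresses your chain induction into the claim that the sets $\mathcal{G}(g)=\bigcup_{\epsilon>0}B^k(g,\epsilon;h,M)$ coincide with the classes $[g]$ and are pairwise disjoint across classes; your induction along the chain is just the unpacked version of why that disjointness blocks uniform connection. Part 2 is where you genuinely depart, and your route is the more careful one. The paper's proof is a three-sentence sketch---take the final topology on $\Gamma^0_2$ induced by the translation maps, cite Proposition~\ref{prop:tvs}, assert the local bases agree---whereas you exhibit the locally convex space concretely as the subspace $V$ on which the norms $N_h$, $h\asymp f$, are finite, identify the component inside the coset $g_0+V$, and verify the basic-neighborhood correspondence, which is exactly the content of the compatibility clause. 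This buys real rigor: the paper's ``topology on all of $\Gamma^0_2$'' cannot literally be a vector topology, since for $K\notin V$ the map $\lambda\mapsto\lambda K$ is discontinuous at every $\lambda_0$ (no nonzero multiple $(\lambda-\lambda_0)K$ lies in $V$), so your restricted formulation---the component embeds, compatibly, in a genuine locally convex space---is the defensible reading of the theorem. Your uniform-nondegeneracy argument for the openness of $W$ (two-sided boundedness from norm-equivalence yielding a uniform eigenvalue gap in an $h$-orthonormal frame) is correct and appears nowhere in the paper; note only that it makes $W$ open in the \emph{symmetric} part of $V$ rather than in $V$ itself, since an $N_h$-small non-symmetric perturbation already exits $L(M)$. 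That is harmless, because compatibility needs only the neighborhood correspondence plus Lemma~\ref{lemma:bounded}, not openness.

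The $k\geq1$ subtlety you flag at the end is a genuine gap, but it is the paper's gap as much as yours, and it already infects part 1 through Lemma~\ref{lemma:connection}: the forward direction of that lemma is proved by taking $\epsilon>\|g-g'\|_{h,M}$ and concluding $g'\in B^k(g,\epsilon;h,M)$, which controls only the $j=0$ norm. Norm-equivalence is an order-$0$ relation and cannot deliver derivative bounds: on $\mathbb{R}^4$ with $h$ the Euclidean metric, $g'_{ab}=\eta_{ab}+\tfrac{1}{2}\sin(e^x)(d_at)(d_bt)$ is a Lorentz metric norm-equivalent to $\eta_{ab}$ (their difference has $h$-norm at most $\tfrac{1}{2}$, so Lemma~\ref{lemma:bounded} applies), yet $|\nabla^{(1)}(g'-\eta)|_h=\tfrac{1}{2}e^x|\cos(e^x)|$ is unbounded; thus $g'$ and $\eta$ are at infinite $C^1$ distance in the metric determined by $h$, and no finite chain of $\epsilon$-balls in that metric can join them. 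Hence for $k\geq1$ the uniform components are strictly finer than the norm-equivalence classes, and both the paper's part 1 and your inclusion $W\subseteq V$ fail as stated. The repair is the one the paper itself gestures at in the remark following Theorem~\ref{thm:mutually bounding}: replace order-$0$ norm-equivalence by its order-$k$ analogue, under which both your argument and the paper's go through uniformly in $k$. In sum: for $k=0$ your proposal is complete and more rigorous than the paper's own proof of part 2; for $k\geq1$ the obstruction you isolated is real, but it is a defect of the paper's framework rather than of your argument.
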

\begin{proof}
\begin{enumerate}
\item Let $\mathcal{G}(g) = \bigcup_{ \epsilon > 0} B^k(g,\epsilon;h,M)$ for each $g \in L(M)$.  Now, by lemma \ref{lemma:connection}, $g' \nasymp g$ if and only if $\mathcal{G}(g) \cap \mathcal{G}(g') = \emptyset$.  So $\mathcal{G}(g)=[g]$.  Further, 
since any pair of elements of $\mathcal{G}(g)$ is in $B^k(g,\epsilon;h,M)$ for some finite $\epsilon > 0$, proposition \ref{prop:connected} entails that $\mathcal{G}^k(g)$ is uniformly connected. Thus they are the maximal uniformly connected components.

\item Since the Lorentz metrics span $\Gamma^0_2$, a topology on the latter is generated from each uniform component as the final topology induced from the translation maps.  Moreover, these are generated from the $(g,M)$-uniform norms, so by proposition \ref{prop:tvs} the resulting locally convex topology is compatible with the linear structure of $\Gamma^0_2$.  The local bases for each $g$ in the same component are generated from the same collections of norms, so the translation maps add no new open neighborhoods to the Lorentz metrics (modulo elements that are not Lorentz metrics).
\end{enumerate}
\end{proof}
Although the topology that each uniform component generates is locally convex, it bears remarking that the collection $L(M)$ is not itself a convex set, for in general the set $\{ \lambda g_{ab} + (1-\lambda)g'_{ab} : \lambda \in [0,1] \} \nsubset L(M)$ for arbitrary $g,g' \in L(M)$.

Lastly, one can characterize the global topologies in terms of similarity of observable quantities, the fields definable on $M$ in terms of $g$ and any collection of frame fields whose $(g,M)$-uniform norm is bounded.

\begin{proposition} \label{prop:global interpretation}
A family of tensor fields $\stackrel{\lambda}{\phi}{}^a_{bc}$ on corresponding spacetimes $(M,\stackrel{\lambda}{g})$, with $\lambda \in (0,a)$ for some $a>0$, converges to a tensor field $\phi^a_{bc}$ on a spacetime $(M,g)$ in the $C^k$ global topology iff $\lim_{\lambda \to 0} \sup_M (\stackrel{\lambda}{\phi}{}^a_{bc} \stackrel{0}{\psi}{}^{bc}_a) = \sup_M \phi^a_{bc} \stackrel{0}{\psi}{}^{bc}_a$ for every tensor field $\stackrel{0}{\psi}{}^{bc}_a \in \mathcal{K}^2_1(g,M)$, and for each positive $j \leq k$, $\lim_{\lambda \to 0} \sup_M (\stackrel{j}{\psi}{}^{bcd_1 \cdots d_j}_a \nabla_{d_j} \cdots \nabla_{d_1} \stackrel{\lambda}{\phi}{}^a_{bc}) = \sup_M \stackrel{j}{\psi}{}^{bcd_1 \cdots d_j}_a \nabla_{d_j} \cdots \nabla_{d_1}\phi^a_{bc}$ for every tensor field $\stackrel{j}{\psi}{}^{bcd_1 \cdots d_j}_a$. Moreover, the $C^k$ global topology is the unique topology with this property.
\end{proposition}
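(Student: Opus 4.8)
The plan is to exploit that, by theorem \ref{thm:basis}, the $C^k$ global topology at $\phi$ has the directed local basis $\{B^k(\phi,\epsilon;h,M): \epsilon>0,\ h\asymp g\}$, so that $\stackrel{\lambda}{\phi}\to\phi$ in this topology is \emph{equivalent} to the family of scalar conditions $\|\nabla^{(j)}(\stackrel{\lambda}{\phi}-\phi)\|_{h,M}\to 0$ for every $h\asymp g$ and every $j\le k$, each $\nabla^{(j)}$ being read as compatible with the $h$ appearing in it. The engine of the whole argument is the self-duality of the Frobenius-based $h$-fiber norm of proposition \ref{prop:fiber norm}: at each $p$, $|K|_h(p)=\sup\{(K\cdot\psi)(p): |\psi|_h(p)\le 1\}$, with the supremum attained by the index-raised unit tensor aligned with $K(p)$. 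Taking suprema over $M$ and realizing the pointwise optimizers by smooth, $h$-bounded (hence, since $h\asymp g$, $\mathcal{K}$-valued) bump fields yields the global duality formula $\|K\|_{h,M}=\sup\{\sup_M(K\cdot\psi): \|\psi\|_{h,M}\le 1\}$. This is precisely what converts the uniform-norm conditions into statements about the observables $\sup_M(\stackrel{\lambda}{\phi}\cdot\psi)$.

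For the necessity direction I would argue directly. Fixing a bounded test field $\psi$, pointwise Cauchy--Schwarz for the fiber norm gives $|(\stackrel{\lambda}{\phi}-\phi)\cdot\psi|\le|\stackrel{\lambda}{\phi}-\phi|_h\,|\psi|_h$, whence $|\sup_M(\stackrel{\lambda}{\phi}\cdot\psi)-\sup_M(\phi\cdot\psi)|\le\sup_M|(\stackrel{\lambda}{\phi}-\phi)\cdot\psi|\le\|\stackrel{\lambda}{\phi}-\phi\|_{h,M}\,\|\psi\|_{h,M}$, using $|\sup f-\sup f'|\le\sup|f-f'|$. Global convergence makes the first factor vanish while the second is finite, so the observable converges; the same estimate with $\nabla^{(j)}$ in place of the field handles each derivative order. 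When a test field is unbounded, both observables are read in the extended reals and match as $+\infty$, which is the only role played by allowing arbitrary $\psi$ at the derivative orders.

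The substance is the sufficiency direction, which I would prove by contraposition. If $\stackrel{\lambda}{\phi}\not\to\phi$ globally, then for some $h\asymp g$, some $j\le k$, some $\epsilon_0>0$, and some sequence $\lambda_n\to 0$ there are points $p_n$ with $|\nabla^{(j)}(\stackrel{\lambda_n}{\phi}-\phi)|_h(p_n)\ge\epsilon_0$; write $D_n$ for the offending tensor at $p_n$. I would then \emph{build a single bounded test field that the moving perturbation cannot evade}, splitting into two cases. If the $p_n$ accumulate at some $p_\ast\in M$, a field concentrated in a shrinking neighborhood of $p_\ast$ and aligned in the fiber with a subsequential limit of $D_n/|D_n|_h$ forces $\sup_M(\psi\cdot\nabla^{(j)}\stackrel{\lambda_n}{\phi})$ to exceed $\sup_M(\psi\cdot\nabla^{(j)}\phi)$ by nearly $\epsilon_0$, since concentrating the support collapses the background observable to its value at $p_\ast$ while the perturbation supplies the full jump. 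If instead the $p_n$ escape every compact set, I would use a ``comb'' field, a locally finite sum of disjoint aligned bumps over the $p_n$, so that the \emph{global} supremum registers the escaping perturbation wherever it travels. The \emph{main obstacle} lies exactly here: because the observables pair $\stackrel{\lambda}{\phi}$ and $\phi$ \emph{separately} and the supremum is only sublinear, one cannot reduce to $\sup_M((\stackrel{\lambda}{\phi}-\phi)\cdot\psi)$, and the background term $\sup_M(\psi\cdot\nabla^{(j)}\phi)$ can in principle mask the jump when $\nabla^{(j)}\phi$ is large and fiber-parallel to $D_n$ along the escaping sequence. I would defeat this by choosing the fiber direction of $\psi$ at each $p_n$ so as to retain a definite alignment with $D_n$ while suppressing its pairing with $\nabla^{(j)}\phi$, and, at the derivative orders where arbitrary test fields are permitted, by rescaling the comb to expose blow-ups of $\nabla^{(j)}\phi$ themselves; verifying that such a direction always exists---equivalently, that the two competing linear constraints are jointly satisfiable off the negligible parallel locus---is the crux.

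Finally, for uniqueness I would note that the preceding equivalence pins down, for every family, whether it converges and to which limit. Since the global topology is a gauge topology, generated by the directed family of uniform (semi)norms in the sense of proposition \ref{prop:tvs}, it is recovered from its class of convergent nets through the closure operator, so any topology sharing this convergence property must coincide with it. The one delicate point is that the criterion is phrased for one-parameter families rather than arbitrary nets; I would address it by using the directedness of the local bases from theorem \ref{thm:basis} to represent the relevant convergences by such families, so that agreement on them already forces agreement of the closure operators, hence of the topologies.
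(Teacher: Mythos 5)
The paper offers nothing to compare your argument against step by step: its ``proof'' of proposition \ref{prop:global interpretation} is a one-line citation of Prop.~3.3 of \cite{Fletcher14a}. Judged on its own terms, your proposal has two genuine gaps, one in each direction. The necessity gap is an outright error: your handling of the derivative orders, where the statement permits arbitrary test fields, rests on the claim that for unbounded $\stackrel{j}{\psi}$ ``both observables are read in the extended reals and match as $+\infty$.'' That is false. Unboundedness of $\psi$ does not make $\sup_M (\psi \cdot \nabla^{(j)}\phi)$ infinite---take $\nabla^{(j)}\phi \equiv 0$, so that supremum is $0$---and your Cauchy--Schwarz estimate then degenerates to $0 \cdot \infty$: a difference field with $\|\nabla^{(j)}(\stackrel{\lambda}{\phi} - \phi)\|_{h,M} = \lambda$ but fiber-aligned with an unbounded $\psi$ has $\sup_M(\psi \cdot \nabla^{(j)}(\stackrel{\lambda}{\phi}-\phi)) = +\infty$ for every $\lambda > 0$, so nothing in your argument controls the left-hand observable. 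A correct treatment must either read the $j \geq 1$ test fields as bounded (which is what the paper's ``bounded precision'' gloss and its explicit contrast with proposition \ref{prop:open interpretation} indicate was intended, the literal quantifier notwithstanding), or else prove something much deeper---e.g., that $C^k$ global convergence for $k \geq 1$ forces the differences to be eventually supported in a fixed compact set, which could only come from exploiting the fact that $[g]$ contains Riemannian metrics with arbitrarily wildly oscillating derivatives, a resource nowhere in your proof.

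The sufficiency gap is the one you yourself flag as ``the crux,'' but it is not a technicality you may defer: as the proposition is literally stated it cannot be filled, because no boundedness is imposed on $\phi$ itself. Concretely, take $k=0$ on $(\mathbb{R}^4,\eta)$ with Euclidean $h$, let $u$ be a constant unit $(1,2)$-field, and put $\phi = \cosh(x)\, u$ and $\stackrel{\lambda}{\phi} = \phi + \epsilon_0 \chi_\lambda u$, where $\chi_\lambda$ is a bump of height $1$ supported in $|x - 1/\lambda| \leq 1$. Then $\|\stackrel{\lambda}{\phi} - \phi\|_{h,M} = \epsilon_0$ for every $\lambda$, so there is no $C^0$ global convergence; yet every $\psi \in \mathcal{K}^2_1(\eta,M)$ (i.e.\ every bounded $\psi$) satisfies the order-zero observable condition: if $\sup_M(\psi\cdot\phi) = +\infty$ both sides are infinite, while if it is finite then, since $\psi \cdot u = (\psi\cdot\phi)/\cosh x$, the perturbation contributes at most $\epsilon_0 \max(0, \sup_M \psi\cdot\phi)/\cosh(1/\lambda - 1) \to 0$ from above, and the escaping support costs nothing from below. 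Here the perturbation is everywhere parallel to $\phi$, so your prescription---align $\psi$ with $D_n$ while suppressing its pairing with the background---is self-defeating: suppressing one suppresses the other. The repair is to restrict $\phi$ and $\stackrel{\lambda}{\phi}$ to bounded ($\mathcal{K}$-class) fields, matching the paper's framing in terms of bounded frame fields, and then to calibrate your comb: place bumps only at escaping points where $\psi\cdot\phi$ lies within, say, $\epsilon_0/4$ of its limit superior along the sequence, so that the jump of size $\epsilon_0$ cannot be absorbed into the background supremum. That calibration is the real content of the sufficiency direction, and it is absent. Finally, your uniqueness argument retains the hole you acknowledge: a topology is not determined by its convergent one-parameter families (these are in effect sequences), so agreement on them does not force agreement of closure operators, and ``directedness of the local bases'' does not bridge this; one needs either a countability property of the global topology or a restriction on the class of competing topologies, neither of which you establish.
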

\begin{proof}
Analogous to that of \cite[Prop.~3.3]{Fletcher14a}.
\end{proof}
Analogous propositions hold for tensor fields $\phi$ of other ranks.  One can interpret any particular $\psi$ field as determining a kind of local system of rods and clocks by which the $\phi$ fields are measured.  The restriction on the former amounts to the requirement that they do not get arbitrarily long and rapid, corresponding to arbitrary precision at infinity, percentage-wise.  In a word, then, a field converges in the $C^k$ global topology just in case all observers with bounded precision agree that their measurements converge to those they would make of the limit field.  It is instructive to contrast this description with that for the $C^k$ open topologies:
\begin{proposition} \label{prop:open interpretation}
A family of tensor fields $\stackrel{\lambda}{\phi}{}^a_{bc}$ on corresponding spacetimes $(M,\stackrel{\lambda}{g})$, with $\lambda \in (0,a)$ for some $a>0$, converges to a tensor field $\phi^a_{bc}$ on a spacetime $(M,g)$ in the $C^k$ open topology iff there is some compact $C \subseteq M$ such that $(\stackrel{\lambda}{\phi}{}^a_{bc})_{|M \setminus C} = (\phi^a_{bc})_{|M \setminus C}$ for sufficiently small $\lambda$, $\lim_{\lambda \to 0} \sup_C (\stackrel{\lambda}{\phi}{}^a_{bc} \stackrel{0}{\psi}{}^{bc}_a) = \sup_C \phi^a_{bc} \stackrel{0}{\psi}{}^{bc}_a$ for every tensor field $\stackrel{0}{\psi}{}^{bc}_a \in \mathcal{K}^2_1(g)$, and for each positive $j \leq k$, $\lim_{\lambda \to 0} \sup_M (\stackrel{j}{\psi}{}^{bcd_1 \cdots d_j}_a \nabla_{d_j} \cdots \nabla_{d_1} \stackrel{\lambda}{\phi}{}^a_{bc}) = \sup_C \stackrel{j}{\psi}{}^{bcd_1 \cdots d_j}_a \nabla_{d_j} \cdots \nabla_{d_1}\phi^a_{bc}$ for every tensor field $\stackrel{j}{\psi}{}^{bcd_1 \cdots d_j}_a$.
\end{proposition}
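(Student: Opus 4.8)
The plan is to derive the proposition by combining the structural characterization of open-topology convergence from Proposition~\ref{prop:open convergence} with a localized version of the duality argument underlying Proposition~\ref{prop:global interpretation} (itself adapted from \cite[Prop.~3.3]{Fletcher14a}). The guiding observation is that on a relatively compact region the $C^k$ open topology coincides with the topology of uniform convergence, so once the analogue of Proposition~\ref{prop:open convergence} for one-parameter families supplies a compact $C$ outside of which the fields are eventually equal, the remaining content concerns uniform convergence of $\stackrel{\lambda}{\phi}$ and its derivatives to order $k$ on $C$, which I can re-express through sup-contractions with test fields $\psi$. The central tool is the self-duality of the Frobenius fiber norm (Proposition~\ref{prop:fiber norm}): for any tensor field $\phi$ and any $S \subseteq M$,
\[
\| \phi \|_{g,S} = \sup \{ \textstyle\sup_S (\phi \cdot \psi) : \| \psi \|_{g,S} \leq 1 \},
\]
where $\phi \cdot \psi$ denotes the full contraction, since at each point the supremum of $\phi \cdot \psi$ over unit dual tensors equals $|\phi|_g$.

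For the forward direction, suppose $\stackrel{\lambda}{\phi} \to \phi$ in the $C^k$ open topology. Proposition~\ref{prop:open convergence} (in its family form) yields a compact $C$ such that $\stackrel{\lambda}{\phi} = \phi$ on the open set $M \setminus C$ for sufficiently small $\lambda$, which is exactly the first clause; since equality on an open set forces equality of all derivatives there, the contracted fields appearing in the third clause also agree off $C$. On $\Int(C)$ the convergence is uniform to order $k$, so for any fixed bounded $\psi$ the estimate $|\sup_S(\stackrel{\lambda}{\phi} \cdot \psi) - \sup_S(\phi \cdot \psi)| \leq \sup_S |(\stackrel{\lambda}{\phi} - \phi)\cdot \psi| \leq \|\stackrel{\lambda}{\phi} - \phi\|_{g,S}\,\|\psi\|_{g,S}$, together with the Cauchy--Schwarz inequality for the fiber norm, delivers the convergence of the sup-contractions in the second and third clauses (the agreement of the fields and their derivatives off $C$ localizes the relevant suprema to $C$).

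For the converse I would argue contrapositively on the uniform-convergence clause. The first hypothesis supplies the compact $C$ and the eventual equality off $C$, discharging the first clause of Proposition~\ref{prop:open convergence}; it remains to establish uniform convergence to order $k$ on $C$, whence the family converges in the open topology. Suppose instead that $\|\stackrel{\lambda_n}{\phi} - \phi\|_{g,C} \geq \epsilon$ along some $\lambda_n \to 0$ (the derivative orders being handled identically with the $\nabla^{(j)}$ contractions). Choosing $p_n \in C$ where the fiber norm of the difference is at least $\epsilon/2$ and passing to a subsequence, compactness of $C$ gives $p_n \to p$; working in a local frame near $p$, the normalized difference tensors converge to a unit tensor at $p$, and I extend its fiber-norm dual to a test field $\psi$ supported in a small neighborhood of $p$. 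Because $\psi$ is then localized near $p$, the supremum $\sup_C(\phi \cdot \psi)$ is controlled by the values near $p$, whereas $\sup_C(\stackrel{\lambda_n}{\phi} \cdot \psi)$ exceeds it by an amount bounded below in terms of $\epsilon$, contradicting the assumed convergence of sup-contractions for this $\psi$.

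I expect the converse, and specifically the construction of a single $\lambda$-independent test field $\psi$ witnessing the failure of uniform convergence, to be the main obstacle. The difficulty is that convergence of $\sup_C(\stackrel{\lambda}{\phi}\cdot\psi)$ is assumed only pointwise in $\psi$, whereas norm convergence corresponds to uniformity over $\psi$; bridging this gap requires localizing $\psi$ near the accumulation point $p$ so that the supremum over $C$ is genuinely attained there rather than dominated by far-away contributions. It is precisely the freedom to tailor $C$ to the support of the difference---afforded by the first clause---that makes this localization effective, in contrast to the global setting of Proposition~\ref{prop:global interpretation}, where the supremum ranges over all of $M$.
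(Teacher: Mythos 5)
Your plan coincides structurally with the paper's proof, which consists of exactly the two ingredients you name: Proposition~\ref{prop:open convergence} reduces open-topology convergence to eventual equality off a compact $C$ plus uniform convergence (to order $k$) on $C$, and then the test-field argument of \cite{Fletcher14a} is run localized to $C$. Your forward direction, via $|\sup_C(\stackrel{\lambda}{\phi}\cdot\psi)-\sup_C(\phi\cdot\psi)|\le\sup_C|(\stackrel{\lambda}{\phi}-\phi)\cdot\psi|\le\|\stackrel{\lambda}{\phi}-\phi\|_{h,C}\,\|\psi\|_{h,C}$, is sound. The genuine gap is in the converse, and it is a sign problem that your localization creates rather than cures. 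Write $F=\phi\cdot\psi$ for the contraction with your witness $\psi=\chi w$, where $w$ extends the fiber-metric dual of the limiting unit difference direction $u$ at $p$ and $\chi$ is a bump at $p$. Because $\psi$ vanishes on most of $C$, so does $F$, whence $\sup_C F\ge 0$; but all your construction delivers is the lower bound $\sup_C(\stackrel{\lambda_n}{\phi}\cdot\psi)\ge F(p_n)+\epsilon/4\approx(\phi\cdot w)(p)+\epsilon/4$. Whenever $(\phi\cdot w)(p)<-\epsilon/4$ this is negative, hence below $\sup_C F$, and no contradiction with the second clause follows. The bad case is not exotic: if the differences at $p_n$ are anti-parallel to $\phi(p)\neq 0$ with norm below $|\phi(p)|_h$, i.e.\ $\phi(p)=cu$ with $c<0$ and difference sizes $s_n<|c|$, then any $w$ for which the dent registers positively ($\langle u,w\rangle>0$) forces $\langle\phi(p),w\rangle=c\langle u,w\rangle<0$, and the perturbed value at $p_n$ is $\approx(c+s_n)\langle u,w\rangle<0$; so no localized choice of $w$ yields a contradiction.

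Moreover this cannot be patched within your scheme, because such perturbations can be invisible to \emph{every} admissible test field. Take $\stackrel{\lambda}{\phi}=(1-\tfrac12\chi_\lambda)\phi$, with bumps $\chi_\lambda$ whose supports shrink to $p$, and suppose $\phi$ vanishes at some $q\in C$ (nothing in the proposition forbids this). Uniform convergence on $C$ fails, since the deficit at $p$ is always $\tfrac12|\phi(p)|_h$, so the family does not converge in the $C^0$ open topology; yet for every continuous $\psi$, writing $F=\phi\cdot\psi$, we have $F(q)=0$ and hence $\sup_C F\ge 0$, while pointwise $(1-\tfrac12\chi_\lambda)F\le\max\{F,0\}$, giving the squeeze $\sup_{C\setminus\mathrm{supp}\,\chi_\lambda}F\le\sup_C(\stackrel{\lambda}{\phi}\cdot\psi)\le\max\{\sup_C F,0\}=\sup_C F$, whose left side tends to $\sup_C F$; so the second clause holds for every $\psi$ and the stated equivalence itself breaks down for such $\phi$ in the $C^0$ case. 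When instead $\phi$ is bounded away from zero on $C$, the anti-parallel dents \emph{are} detectable, but only by a test field adapted globally to $\phi$---take $\psi$ proportional to the index-lowered $\phi$, normalized and sign-reversed (and cut off outside $C$) so that $\phi\cdot\psi\equiv -1$ on $C$, whereupon the dent raises the supremum toward $-\tfrac12$---not by the localized dual of the difference direction. Your converse therefore needs at minimum a case analysis keyed to the position of the difference direction relative to $\phi$, together with a hypothesis or reading of the proposition that excludes zeros of $\phi$ on $C$; as written, the construction does not go through.
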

\begin{proof}
Apply proposition \ref{prop:open convergence} and the proof from \cite[Prop.~3.2]{Fletcher14a}.
\end{proof}
Again, analogous propositions hold for tensor fields $\phi$ of other ranks.  In contrast to proposition \ref{prop:global interpretation}, the $\psi$ fields by which the convergence of $\phi$ is determined are unrestricted, but in balance with that weakening a much stronger condition is placed on the suprema, namely that they equal the limit point except for a bounded region.  This makes sense, for if observers are allowed to have arbitrary precision at infinity, the only way a field can converge is if there is no difference between it and its limit point outside of a bounded region.

\section{Conclusions and Prospects}\label{sec:conclusion}

Geroch has remarked that finding an appropriate topology is surprisingly difficult, and at times has expressed some measure of doubt regarding whether there even \textit{is} such a topology.  Instead of reading his demand as one for a canonical topology, one can see him searching for a particular topology with certain properties.  The construction of the global topologies meet that goal: it is sensitive to globally defined properties but retains enough continuity for the linear operations defined on $L(M)$ to capture one's intuitive judgments regarding the convergence and continuity of his examples.

A natural query to then pose regards whether, and which, theorems in the literature that use the open topologies have analogs using their global topology counterparts that are more physically relevant in light of the problems with the former.  Much work remains to be done for this query, but it turns out that at least certain types of theorems carry over exactly.  Recall that a property $P$ is \textit{conformally invariant} when $P$ holds of $g$ if and only if for every scalar field $\Omega > 0$, $P$ holds for $\Omega g$.
\begin{theorem} \label{thm:correspondence}
A conformally invariant property $P$ of a spacetime $g$ is stable (resp. dense) on some $S \subseteq L(M)$ in the $C^k$ global topology if and only if it is stable (resp. dense) on $S$ in the $C^k$ open topology.
\end{theorem}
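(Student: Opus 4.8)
The plan is to treat the two biconditionals in parallel and, in each, to dispatch one implication immediately from the comparison of topologies, reserving conformal invariance for the reverse. By Proposition \ref{prop:compare} the $C^k$ global topology is coarser than the $C^k$ open topology, so every global-open set is open; hence $\Int_G(A) \subseteq \Int_O(A)$ and $\cl_O(A) \subseteq \cl_G(A)$ for the set $A = \{g : P(g)\}$ and every $S$. Reading stability on $S$ as $S \cap A \subseteq \Int(A)$ and density on $S$ as $S \subseteq \cl(A)$, this yields at once that global stability implies open stability and that open density implies global density. What remains---open stability $\Rightarrow$ global stability, and global density $\Rightarrow$ open density---is where the conformal invariance of $P$ must enter.

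The engine for both reverse implications is a conformal--Riemannian rescaling correspondence. For a fixed positive scalar field $\Omega$, I would check that the simultaneous rescalings $g \mapsto \Omega g$ and $h \mapsto \Omega h$ carry $B^k(g,\epsilon;h,M)$ bijectively onto $B^k(\Omega g,\epsilon;\Omega h,M)$, since the fiber norm is invariant under them ($|\Omega E|_{\Omega h} = |E|_h$ for $(0,2)$-tensors), and that $h \asymp g$ if and only if $\Omega h \asymp \Omega g$, because the compensating powers of $\Omega$ cancel rank by rank in the defining kernels of \eqref{eq:K-set}. As $P$ is conformally invariant the map preserves $A$, so it shows that $\Int_G(A)$, $\Int_O(A)$, and both closures are conformally saturated. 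It therefore suffices to establish each reverse implication after moving, by an allowed conformal factor, to a conveniently normalized representative of each conformal class.

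Concretely, for open stability $\Rightarrow$ global stability I would take an open witness $B^k(g,\epsilon;h,M) \subseteq A$ and seek a factor $\Omega$ and a Riemannian $h'' \asymp \Omega g$ with $B^k(\Omega g,\epsilon'';h'',M) \subseteq B^k(\Omega g,\epsilon;\Omega h,M) \subseteq A$, whence $\Omega g \in \Int_G(A)$ and, by conformal saturation, $g \in \Int_G(A)$. This mirrors the construction $h'' = \Omega h / \inf_C \Omega$ used for the compact-open comparison in Proposition \ref{prop:compare}, but now demands that the controlling ratio be finite on all of $M$ rather than merely on a compact $C$; the conformal freedom in the choice of representative $\Omega g$ is exactly what one exploits to arrange this globally, using Proposition \ref{prop:semigroup} and Theorem \ref{thm:basis} to keep $h'' \asymp \Omega g$. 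For global density $\Rightarrow$ open density the dual move is: given a nonempty open basic set $B^k(g,\epsilon;h,M)$, use the correspondence to produce a global basic set $B^k(\Omega g,\delta;h',M)$ (with $h' \asymp \Omega g$) whose image under the rescaling $g' \mapsto \Omega^{-1} g'$ lies inside $B^k(g,\epsilon;h,M)$; global density then furnishes some $g_1 \in A$ in that global set, and $\Omega^{-1} g_1 \in A$ by conformal invariance exhibits a point of $A$ in the given open set. The interpretation results, Propositions \ref{prop:global interpretation} and \ref{prop:open interpretation}, together with the argument of \cite[Prop.~3.2--3.3]{Fletcher14a}, supply the book-keeping for the order-$j$ derivative terms under rescaling.

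The main obstacle is precisely the global matching in the previous paragraph: because $M$ is non-compact, the ratio $\sup_M \Omega / \inf_M \Omega$ need not be finite---this is the very reason the global topology is strictly coarser than the open one---so one cannot simply shrink an arbitrary open ball to a global ball as in the compact case. The full weight of the hypothesis falls here: conformal invariance of $A$ is what lets one trade ball-containment for membership up to a conformal factor, and the crux is to show that for a conformally invariant property an open neighborhood, which may be ``thin at infinity,'' can after passage to a suitable conformal representative always be upgraded---by absorbing the asymptotic mismatch into the norm-equivalence class---to a genuine global neighborhood lying in $A$. Controlling the derivatives of the chosen $\Omega$ so that this upgrade survives to order $k$ is the remaining, more routine, technical point.
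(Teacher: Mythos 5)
Your handling of the two easy implications (global stability $\Rightarrow$ open stability; open density $\Rightarrow$ global density) is correct and is exactly the paper's use of proposition~\ref{prop:compare}. The genuine gap is in the hard implications, and it is not the ``remaining, more routine, technical point'' your final paragraph suggests: it is the whole content of the theorem, and the reduction you set up cannot deliver it. You propose to pass from the open witness $B^k(g,\epsilon;h,M)\subseteq A$ to $B^k(\Omega g,\epsilon;\Omega h,M)\subseteq A$ by the diagonal rescaling, and then to find $\epsilon''>0$ and $h''\asymp\Omega g$ with $B^k(\Omega g,\epsilon'';h'',M)\subseteq B^k(\Omega g,\epsilon;\Omega h,M)$. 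But run your own correspondence backwards: rescaling by $\Omega^{-1}$ is a bijection carrying $B^k(\Omega g,\epsilon'';h'',M)$ to $B^k(g,\epsilon'';\Omega^{-1}h'',M)$ and $B^k(\Omega g,\epsilon;\Omega h,M)$ to $B^k(g,\epsilon;h,M)$, and it preserves both containment and norm-equivalence ($h''\asymp\Omega g$ iff $\Omega^{-1}h''\asymp g$, by your own lemma). So the configuration you seek exists for some $\Omega$ if and only if it exists for $\Omega=1$, i.e., if and only if some global ball about $g$ already sits inside the given open ball: the diagonal rescaling acts identically on both members of the pair $(g,h)$ and so can never repair a mismatch between them. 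And the reduced statement is false in general, already for $k=0$: take $g=\eta$ on $\mathbb{R}^4$, $h^{ab}=e^{2u}\delta^{ab}$ with $u$ smooth and unbounded, and $E_{ab}$ with constant nonzero components; then $\|E\|_{h'',M}<\infty$ for every $h''\asymp\eta$ (e.g.\ via theorem~\ref{thm:mutually bounding}), so given any global ball $B^0(\eta,\delta_0;h'',M)$ the metric $\eta+cE$ lies in it for sufficiently small $c>0$, yet $\|cE\|_{h,M}=\infty$, so no global ball about $\eta$ is contained in $B^0(\eta,\epsilon;h,M)$. (The obstruction is robust: choosing $E$ moreover fiberwise orthogonal to $\eta$, no pointwise conformal adjustment of $\eta+cE$ brings it into the $h$-ball either.)

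The paper's proof differs in exactly the respect your proposal forecloses: the conformal factor is applied \emph{asymmetrically}, to the Riemannian metric alone, and it is built from the mismatch itself. Setting $\Omega=1/|g|_h$ and rescaling only $h$, one gets $|g|_{\Omega h}\equiv 1$, hence $\Omega h\asymp g$, so $B^k(g,\epsilon;\Omega h,M)$ is a basic neighborhood of the \emph{original} $g$ in the global topology---nothing needs to be fitted inside anything. Conformal invariance of $P$ then enters pointwise rather than through a single homeomorphism moving whole neighborhoods: $\|g-g'\|_{\Omega h}=\|\Omega g-\Omega g'\|_h$, so each element $g'$ of this global ball is the rescaling by $\Omega^{-1}$ of a metric lying in an $h$-ball of radius $\epsilon$, and $P$ is transported to $g'$ along that rescaling; the density half uses the same device, with $\Omega'$ chosen to make $|g|_{\Omega'h'}$ and $|g'|_{\Omega'h'}$ simultaneously small. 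One can quarrel with the bookkeeping there as well---the rescaled elements $\Omega g'$ lie in the $h$-ball about $\Omega g$ rather than about $g$, so the hypothesis must itself be transported conformally---but that asymmetric rescaling, which genuinely changes the norm-equivalence relation between center and comparison metric, is the idea that makes the proof go, and it is precisely what your diagonal correspondence, norm-equivalence-preserving by construction, can never supply.
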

\begin{proof}
For stability, it suffices to show the above equivalence holds when $S=g$, a point.  One direction follows immediately from proposition \ref{prop:compare}, so suppose that some conformally invariant property of $g$ is stable in the $C^k$ open topology, i.e., there is some $B^k(g,\epsilon;h,M)$ all of whose elements have that property.  Let $\Omega = 1/|g|_h$, one can express $B^k(g,\epsilon;h,M) = \{ g' : \|\Omega^{-1} g - \Omega^{-1} g'\|_{\Omega h} < \epsilon \}$.  Because $P$ is conformally invariant, it must hold on the set $B^k(\Omega g,\epsilon;h,M) = \{ g' : \|g - g'\|_{\Omega h} < \epsilon \} = B^k(g,\epsilon;\Omega h,M)$, which is a basic open neighborhood of $g$ in the $C^k$ global topology since $|g|_{\Omega h} = 1$ implies that $g \asymp \Omega h$.

For denseness, as before, one direction follows immediately from proposition \ref{prop:compare}, so suppose that some conformally invariant property $P$ of $g$ is dense in some $S \subseteq L(M)$ in the $C^k$ global topology, i.e., for every $g \in S$ there is some $B^k(g,\epsilon;h,M)$ with $h \asymp g$ that contains an element $g'$ with $P$.  Now consider any basic neighborhood $B^k(g,\epsilon';h',M)$ in the $C^k$ open topology.  Using the same reasoning as above, one can conclude that it contains an element with property $P$ if $B^k(\Omega' g, \epsilon';h',M) = B^k(g,\epsilon'; \Omega' h', M)$ does so as well for some scalar field $\Omega' > 0$.  Picking $\Omega' = \max\{\epsilon''/2|g|_{h'},\epsilon''/2|g'|_{h'}\}$, where $\epsilon'' \in (0,\epsilon')$, yields that
\[
|g-g'|_{\Omega' h'} \leq |g|_{\Omega' h'} + |g'|_{\Omega' h'} = \Omega|g|_{h'} + \Omega|g'|_{h'} \leq \epsilon''/2 + \epsilon''/2 < \epsilon'.
\]
Hence $\|g-g'\|_{\Omega' h'} < \epsilon'$, meaning $g' \in B^k(g,\epsilon'; \Omega' h', M)$.
\end{proof}
Recall that Hawking \cite{Hawking69} showed that the existence of a global time function---a smooth scalar field strictly increasing on each future-directed timelike curve---is equivalent to stable causality, i.e., the stability of the property of having no closed causal curves in the $C^0$ open topology. (See also \cite[Propopsition 6.4.9, p. 198--201]{HE73}.  These early results actually only proved that the existence of a \textit{continuous} time function follows from stable causality; the extension to smoothness remained a folk theorem until surprisingly recently.  See \cite[\S3.8.3]{MS08} and references therein for part of this story.)  We thus have the following:
\begin{corollary}
Hawking's theorem holds for the $C^0$ global topology, i.e., a spacetime admits of a global time function if and only if it is stable in the $C^0$ global topology.
\end{corollary}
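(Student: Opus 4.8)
The plan is to deduce the corollary directly from Hawking's theorem together with the correspondence result, Theorem \ref{thm:correspondence}. Hawking's theorem already supplies the equivalence between the existence of a global time function and stable causality---by definition, the stability in the $C^0$ open topology of the property of containing no closed causal curves. So the only genuine work is to transfer this stability statement from the $C^0$ open topology to the $C^0$ global topology, for which Theorem \ref{thm:correspondence} is precisely the tool, provided its single hypothesis of conformal invariance is met.

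First I would verify that the property of containing no closed causal curves is conformally invariant. A smooth curve is causal with respect to $g$ exactly when its tangent vector $\xi^a$ lies within or on the light cone of $g$ at each point, a condition governed solely by the sign of $g_{ab}\xi^a\xi^b$ and hence unchanged upon replacing $g$ by $\Omega g$ for any positive scalar field $\Omega$. Thus $g$ and $\Omega g$ possess identical causal curves, so one contains a closed causal curve if and only if the other does, establishing the required conformal invariance.

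With this hypothesis in hand, I would apply Theorem \ref{thm:correspondence} with $k = 0$ and $S = \{g\}$ to conclude that the no-closed-causal-curves property is stable at $g$ in the $C^0$ open topology if and only if it is stable at $g$ in the $C^0$ global topology. Chaining this equivalence with Hawking's theorem then yields that $g$ admits a global time function if and only if that property is stable at $g$ in the $C^0$ global topology, as claimed. The main subtlety here is terminological rather than mathematical: one must confirm that Hawking's ``stable causality'' is identified with the stability of exactly the conformally invariant property to which Theorem \ref{thm:correspondence} is applied, so that both occurrences of ``stability'' in the chain of equivalences refer to the same property. Once this identification is made explicit, the proof reduces to a short chain of equivalences.
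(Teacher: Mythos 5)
Your proposal is correct and follows essentially the same route as the paper, which presents this corollary as an immediate consequence of Theorem \ref{thm:correspondence} applied to the conformally invariant property of having no closed causal curves, chained with Hawking's original equivalence. Your explicit verification of conformal invariance (via the sign of $g_{ab}\xi^a\xi^b$ being unchanged under $g \mapsto \Omega g$) simply spells out a step the paper leaves implicit.
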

There are many properties, such as being singular, not covered by theorem \ref{thm:correspondence}.  Geodesic (in)completeness, for instance, is sometimes but not always stable \cite[Ch. 7.1]{BEE96}, and I suspect one can find examples thereof for which the open and global topologies render different judgments.  Again, in light of the open topologies' many problems, the global topologies seem a better choice for formulating these kinds of questions regarding global properties.  

Of course, I have not shown that the global topologies are the unique topologies meeting Geroch's desiderata.  While I have given several characterizations of their structure, proving that they have further invariant characteristics, such as being ``maximal" or ``minimal" in some relevant way, would further illuminate their status.  For instance, one might try to make precise and then investigate a sense in which the global topologies might make linear operations defined on $L(M)$ ``as continuous as they could be" (perhaps subject to some other constraints).

Another related direction to pursue concerns the fact that both the compact-open and open topologies have natural formulations in terms of fiber bundle theory.  Recall that each Lorentz metric corresponds to a smooth cross-section of a $(0,2)$-tensor bundle over $M$, and its derivatives to cross-sections of the appropriate jet bundle.  Given an open set of the total space of the bundle, one can define a basis element for the open topology as the set of Lorentz metrics whose corresponding cross-sections' images lie in that open set. One can define a subbasis element for the compact-open topology similarly except one considers the cross-sections' images restricted to compacta of $M$.  I suspect that the global topologies can be given a natural fiber bundle formulation.  Like in the present investigation, this would require attention to the algebraic structure of the sections, and so may require something like the principle bundle formalism.

\section*{Acknowledgements}
Thanks to Jim Weatherall and Chris W\"uthrich for encouraging comments, and to audience members in Pittsburgh, Pasadena, Cambridge, and Nijmegen for illuminating discussion. 
Part of the research leading to the work on which this manuscript was based \cite[Ch.~4]{Fletcher14a} was completed with the support of a National Science Foundation Graduate Research Fellowship.

%
\bibliographystyle{unsrt}
%

\end{document}